\begin{document}
\title{Generalized $\beta$-skeletons
\thanks{This research is supported by the ESF EUROCORES programme EUROGIGA,
CRP VORONOI.}}
\author{
        Miros{\l}aw Kowaluk  
            \and
        Gabriela Majewska  
\institute{
Institute of Informatics, University of Warsaw, Warsaw, Poland,
\texttt{kowaluk@mimum.edu.pl}
\texttt{gm248309@students.mimuw.edu.pl}
}}

\maketitle

\begin{abstract}
$\beta$-skeletons, a prominent member of the neighborhood graph family, have
interesting geometric properties and various applications ranging from
geographic networks to archeology.
This paper focuses on developing a new, more general than the present one, definition
of $\beta$-skeletons based only on the distance criterion.
It allows us to consider them in many different cases, e.g. for weighted graphs 
or objects other than points.
Two types of $\beta$-skeletons are especially well-known: the Gabriel Graph 
(for $\beta = 1$) and the Relative Neighborhood Graph (for $\beta = 2$).
The new definition retains relations between those graphs and the other well-known ones
(minimum spanning tree and Delaunay triangulation).
We also show several new algorithms finding $\beta$-skeletons.           
\end{abstract}

\section{Introduction}

The $\beta$-skeletons \cite{kr85} in $R^2$ belong to the family of
proximity graphs, geometric graphs in which two vertices (points) are connected
with an edge if and only if they satisfy particular geometric requirements.
In the case of $\beta$-skeletons, those requirements are dependent on a given parameter 
$\beta \geq 0$. 
 
The $\beta$-skeletons are both important and popular because of many practical 
applications. The applications span a wide spectrum of areas: 
from geographic information systems to wireless ad hoc networks and machine learning. 
They also facilitate reconstructing  shapes of two-dimension objects from 
sample points, and are also useful in finding the minimum weight triangulation 
of point sets. 
  
{\em Gabriel graphs} ($1$-skeletons), 
defined by Gabriel and Sokal \cite{gs69}, are an example of $\beta$-skeletons for $\beta=1$.

The {\em relative neighborhood graph} (\textit{RNG}) is  another example of the $\beta$-skeleton
graph family, for $\beta=2$. The \textit{RNG} was 
introduced by Toussaint \cite{tou80} in the context of their applications in pattern 
recognition. 

Kirkpatrick and Radke \cite{kr85} proved an important theorem connecting $\beta$-skeletons 
with the minimum spanning tree $\mathit{MST}(V)$ and Delaunay triangulation $\mathit{DT}(V)$ of $V$ :

\begin{theorem}
\label{faktinkluzja}
Let us assume that points in $V$ are in general position.
For $1\leq \beta \leq \beta' \leq 2$ following inclusions hold true: 
$\mathit{MST}(V) \subseteq \mathit{RNG}(V) \subseteq G_{\beta'}(V)\subseteq G_{\beta}(V) \subseteq GG(V) 
\subseteq \mathit{DT}(V)$.
\end{theorem}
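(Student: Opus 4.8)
The plan is to split the chain into three independent parts: the inclusions among the $\beta$-skeletons themselves (for $1\le\beta\le\beta'\le 2$), the extreme inclusion $\mathit{MST}(V)\subseteq\mathit{RNG}(V)$, and the extreme inclusion $GG(V)\subseteq\mathit{DT}(V)$.

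For the middle inclusions I would argue directly with the regions of influence. Writing $N(p,q,\beta)$ for the lune of the pair $p,q$ — for $1\le\beta\le 2$ the intersection of the two disks of radius $\beta|pq|/2$ that pass through both $p$ and $q$ — the edge $pq$ lies in $G_\beta(V)$ exactly when $N(p,q,\beta)$ contains no point of $V\setminus\{p,q\}$. The key geometric fact is the monotonicity $N(p,q,\beta)\subseteq N(p,q,\beta')$ whenever $1\le\beta\le\beta'\le 2$, which follows from comparing the two defining disks: as $\beta$ grows their radii increase and their centers slide outward along the line through $p$ and $q$. Granting this, if $pq\in G_{\beta'}(V)$ then $N(p,q,\beta')$ is empty, hence so is the smaller region $N(p,q,\beta)$, so $pq\in G_\beta(V)$; this yields $G_{\beta'}(V)\subseteq G_\beta(V)$. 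Specializing $\beta=1$ (the Gabriel case) and $\beta'=2$ (the \textit{RNG} case) gives $\mathit{RNG}(V)\subseteq G_{\beta'}(V)\subseteq G_\beta(V)\subseteq GG(V)$.

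For $\mathit{MST}(V)\subseteq\mathit{RNG}(V)$ I would use a cut-exchange argument. Let $pq$ be an edge of $\mathit{MST}(V)$ and suppose for contradiction that some $r\in V$ lies in the \textit{RNG}-lune of $p,q$, i.e.\ $|pr|<|pq|$ and $|qr|<|pq|$ (strict by general position). Deleting $pq$ splits the tree into two components; $r$ lies in one of them, say the one containing $p$ (the other case is symmetric). Reconnecting with the edge $qr$ then produces a spanning tree of strictly smaller total weight, contradicting minimality; hence $pq\in\mathit{RNG}(V)$. For $GG(V)\subseteq\mathit{DT}(V)$ I would invoke the empty-circle characterization of the Delaunay triangulation: $pq$ is a Delaunay edge iff some circle through $p$ and $q$ has no point of $V$ in its interior. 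If $pq\in GG(V)$, the disk with diameter $pq$ is such a circle, so $pq\in\mathit{DT}(V)$; general position guarantees $\mathit{DT}(V)$ is a genuine triangulation.

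The main obstacle is the nesting lemma $N(p,q,\beta)\subseteq N(p,q,\beta')$: the two extreme inclusions are short combinatorial arguments, but every inclusion among the skeletons rests on the monotonicity of the lune in $\beta$, which must be checked carefully from the circle definitions — and this is precisely where the restriction $1\le\beta\le 2$ (rather than $\beta<1$ or $\beta>2$) is used.
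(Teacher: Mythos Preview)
The paper does not supply its own proof of this theorem; it is quoted as a result of Kirkpatrick and Radke \cite{kr85}, and later lemmas (Lemmas~\ref{finalequalityl1} and~\ref{gg-dt-s}) explicitly defer the $\mathit{MST}$ and $\mathit{DT}$ inclusions back to that source. Your three-part decomposition and the arguments you give (lune monotonicity for the middle chain, cycle/exchange for $\mathit{MST}\subseteq\mathit{RNG}$, empty-circle for $GG\subseteq\mathit{DT}$) are exactly the classical proof, so there is nothing to compare against beyond saying you have reproduced the standard argument.

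Two small corrections are worth making. First, your description of the lune for $1\le\beta\le 2$ as the intersection of two disks ``that pass through both $p$ and $q$'' is not the paper's definition (Definition~\ref{betaskeleton}): for $\beta>1$ each defining disk has its center on the line $pq$ and passes through only \emph{one} of $p,q$, containing the other strictly. Your later sentence about centers sliding outward shows you have the right picture; just fix the wording. Second, the monotonicity $N(p,q,\beta)\subseteq N(p,q,\beta')$ actually holds for all $1\le\beta\le\beta'$ (one checks $d(c_1(\beta),c_1(\beta'))=r(\beta')-r(\beta)$, so the disks are internally tangent), not only for $\beta'\le 2$; the upper bound $2$ in the statement is there because $\mathit{RNG}=G_2$, not because the nesting fails beyond it.
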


According to this theorem, many algorithms computing Gabriel graphs and relative neighborhood graphs
in subquadratic time were created \cite{ms84,su83,jk87,jky89,l94}.

There are very few algorithms for Gabriel graphs and relative neighborhood graphs
in metrics different than the euclidean one (\cite{mg11,w06}). In particular, $\beta$-skeletons 
in different metrics have not been studied and this paper makes an initial effort to fill this gap.
Our main purpose of this paper is to develop a definition of $\beta$-skeleton based 
only on the distance criterion.
Moreover, $\beta$-skeletons defined in a new way fulfill all conditions of Theorem \ref{faktinkluzja}.

Two different forms of $\beta$-neighborhoods have been studied for $\beta > 1$
(see for example  \cite{abe98,e02})   
leading to two different families of $\beta$-skeletons: lens-based $\beta$-skeletons 
and circle-based $\beta$-skeletons. 
The new definition of $\beta$-skeletons can be used in both cases. 
However, in this work, we focus on the lens-based $\beta$-skeletons.

The paper is organized as follows.
The definition and basic properties of $\beta$-skeletons are introduced in Section 2. 
In Section 3 we present a distance based definition of a $\beta$-skeleton in $l_p$ metrics. 
In Section 4 we discuss 
a problem of not uniquely defined centers of discs determining regions of a $\beta$-skeleton. 
Then, we consider $\beta$-skeletons in weighted graphs. In the next section we describe a case 
when the generators of the regions are not uniquely defined. In Section 7 we formulate 
the definition of the generalized $\beta$-skeletons. In Section 8 we present some algorithms 
computing generalized $\beta$-skeletons for cases discussed in the previous sections. 
The last section contains open problems and conclusions.

\section{Preliminaries}

We consider a two-dimensional plane $R^2$ with the $l_p$ metric (with distance 
function $d_p$), where $1<p<\infty$. 

\begin{definition} \cite{kr85}
\label{betaskeleton}
For a given set of points  $V=\{v_1, v_2, \dots , v_n\}$ in $R^2$ and parameters 
$\beta \geq 0$ and $p$ we define graph 
$G_{\beta}(V)$ -- called a lens-based $\beta$-skeleton -- as follows:  
two points $v_1, v_2$ are connected with an edge if and only if no point 
from $V \setminus \{v_1, v_2\}$ belongs to the set 
$N_{p}(v_1, v_2,\beta)$ where:

\begin{enumerate} 
\item 
for $\beta=0$,  $N_{p}(v_1, v_2,\beta)$ is equal to the segment $v_1v_2$;
\item 
for $0<\beta<1$, $N_{p}(v_1, v_2,\beta)$ is the intersection of two discs in $l_p$, 
each of them has
radius $\frac{d_p(v_1v_2)}{2\beta}$ and whose boundaries contain both $v_1$ and $v_2$;
\item 
for $1 \leq \beta<\infty$, $N_{p}(v_1,v_2,\beta)$ is the intersection of two $l_p$ 
discs, each with radius $\frac{\beta d_p(v_1v_2)}{2}$, whose centers are in points 
$(\frac{\beta}{2})v_1+(1-\frac{\beta}{2})v_2$ and 
in $(1-\frac{\beta}{2})v_1+(\frac{\beta}{2})v_2$, respectively;
\item 
for $\beta=\infty$, $N_{p}(v_1,v_2,\beta)$ is the unbounded strip between lines 
perpendicular to the segment $v_1v_2$ and containing $v_1$ and $v_2$.
\end{enumerate}

The region $N_{p}(v_1, v_2,\beta)$ is called a lens (the name of the skeleton type, used 
in the literature, i.e. lune-based, is descended from the complement of the lens to the disc 
which looks like a lune - see Figure \ref{fig:region}). 
\end{definition}

Furthermore,  we can  consider  open or closed $N_{p}(v_1,v_2,\beta)$ regions
that lead to {\em an open} or {\em closed $\beta$-skeletons}. For example,  the
{\em Gabriel graph} is the closed $1$-skeleton and the
{\em relative neighborhood graph} is the open $2$-skeleton.

In a similar way we can define a different family of graphs by changing 
the lens-based definition for $\beta > 1$. This new family of graphs is called 
circle-based $\beta$-skeletons. \\ 

\begin{definition} \cite{e02}
For a given set of points  $V=\{v_1, v_2, \dots , v_n\}$ in $R^2$ and parameters 
$\beta \geq 0$ and $p$ we define graph 
$G^c_{\beta}(V)$ -- called a circle-based $\beta$-skeleton -- as follows:  
two points $v_1, v_2$ are connected with an edge if and only if no point 
from $V \setminus \{v_1, v_2\}$ belongs to the set 
$N^c_{p}(v_1, v_2,\beta)$ where:
\begin{enumerate}
\item
for $\beta <1$ we define set $N^c_p(v_1,v_2,\beta)$ the same way as for lens-based $\beta$-skeleton;
\item
for $1 \leq \beta<\infty$ the set $N^c_{p}(v_1,v_2,\beta)$ is a union of two discs in $l_p$, 
each with radius $\frac{\beta d_p(v_1v_2)}{2}$, whose boundaries contain both $v_1$ and $v_2$;
\item
for $\beta=\infty$, $N^c_{p}(v_1,v_2,\beta)$ is a union of the segment $v_1v_2$ and two open 
hyperplanes defined by the line passing by $v_1$ and $v_2$.
\end{enumerate} 
\end{definition}

In this work we generally focus on lens-based $\beta$-skeletons.\\ 

\begin{figure}[htbp]
\centering
\includegraphics[scale=0.3]{./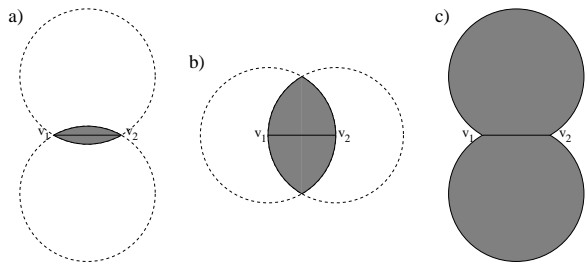}
\caption{Lenses of the $\beta$-skeleton for $0 < \beta <1$ (a), the lens-based skeleton 
(b) and the circle-based skeleton (c) for $1 \leq \beta \leq \infty$.}
\label{fig:region}
\end{figure}



\section{Distance based definition of $\beta$-skeletons }
\label{distancebased}


To describe $\beta$-skeletons for $\beta \in \{0, \infty\}$ in the remaining part of this paper 
we will use the definition of Painleve-Kuratowski convergence \cite{k66}: 

\begin{definition}
\label{kuratowskiconv}
Let $X$ be a space with metric $d$, $ \{ S_n \}_{ n \in N}$ be a sequence 
of subsets in $X$ and for any $x \in X$ there is $d(x,S_n)=inf\{d(x,s)|s \in S_n\}$. Then:
\begin{enumerate}
\item
the upper (outer) limit $\limsup_{n \rightarrow \infty} S_n$ of $\{S_n\}$ 
is defined as a such a set that:\\
$\limsup_{n \rightarrow \infty} S_n=\{p \in X| \liminf d(p, S_n)=0\}$;
\item
the lower (inner) limit 
$\limsup_{n \rightarrow \infty} S_n$ of $\{S_n\}$ is such a set that:\\
$\liminf_{n \rightarrow \infty} S_n=\{p \in X| \limsup d(p, S_n)=0 \}$;
The sequence $ \{ S_n \}_{ n \in N}$ is said to be convergent in the sense of Painleve-Kuratowski
and denoted as $\lim_{n \rightarrow \infty} S_n =S$ 
if $\limsup_{n \rightarrow \infty} S_n=\liminf_{n \rightarrow \infty} S_n = S$.
\end{enumerate}
\end{definition}




The regions $N_{p}(v_1,v_2,\beta)$ are uniquely defined in $l_p$ metric for each $0 < \beta < \infty$.
Therefore $\lim_{\beta \rightarrow 0} N_p(v_1,v_2, \beta)$ and 
$\lim_{\beta \rightarrow \infty} N_p(v_1,v_2, \beta)$ are also unique and
for any two points $v_1,v_2 \in V$ we have 
$\lim_{\beta \rightarrow 0} N_p(v_1,v_2, \beta)=N_p(v_1,v_2,0)$ and 
$\lim_{\beta \rightarrow \infty} N_p(v_1,v_2, \beta)=N_p(v_1,v_2, \infty)$.

Let us consider lens-based $\beta$-skeletons in $l_p$ metric, where $\beta \geq 1$. 
Let $c_1$ ($c_2$, respectively) be the center of a disc defining the lens $N_{p}(v_1, v_2,\beta)$ 
and $d_p(v_1, c_1)=\frac{\beta d_p(v_1v_2)}{2}$  ($d_p(v_2, c_2)=\frac{\beta d_p(v_1v_2)}{2}$, 
respectively). 
It is easy to notice that $d_p(c_1, c_2)=(\beta -1)d_p(v_1,v_2)$ 
and $d_p(v_1, c_2)=d_p(v_2,c_1)=|\frac{(\beta -2) d_p(v_1v_2)}{2}|$ .

\begin{lemma}
\label{lemata}
Let $c_1, c_2$ be points in $R^2$ with a $l_p$ metric, where $1<p<\infty$,
$d_p(c_1, c_2)=(\beta -1)d_p(v_1,v_2)$ and 
$d_p(v_1, c_2)=d_p(v_2,c_1)=|\frac{(\beta -2) d_p(v_1v_2)}{2}|$ . 
Then, the points $c_1, c_2$ are centers of discs determining the lens $N_{p}(v_1, v_2,\beta)$.
\end{lemma}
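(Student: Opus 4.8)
The plan is to prove the stronger statement that the hypotheses force $c_1$ and $c_2$ to be exactly the two points $(1-\frac{\beta}{2})v_1+\frac{\beta}{2}v_2$ and $\frac{\beta}{2}v_1+(1-\frac{\beta}{2})v_2$ appearing in Definition~\ref{betaskeleton}; since the discs in that definition have the fixed radius $\frac{\beta d_p(v_1v_2)}{2}$, once the centres are identified the two discs of that radius centred at $c_1,c_2$ are literally the discs of the definition, and their intersection is $N_p(v_1,v_2,\beta)$. Throughout, recall that $\beta\ge 1$ (so that $(\beta-1)d_p(v_1,v_2)$ is a legitimate distance), and write $L=d_p(v_1,v_2)$.

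The first step is to recover the distances $d_p(v_1,c_1)$ and $d_p(v_2,c_2)$, which are not among the hypotheses. I would bound $d_p(v_2,c_2)$ from both sides by the triangle inequality applied to the triples $\{v_1,v_2,c_2\}$ and $\{v_2,c_1,c_2\}$, using $d_p(v_1,c_2)=d_p(v_2,c_1)=|\frac{(\beta-2)L}{2}|$ and $d_p(c_1,c_2)=(\beta-1)L$. Splitting into the cases $1\le\beta\le 2$ and $\beta\ge 2$ to resolve the absolute value, a short computation shows that both bounds equal $\frac{\beta L}{2}$; hence $d_p(v_2,c_2)=\frac{\beta L}{2}$, and --- this is the crucial output --- each of the two triangle inequalities used is in fact an equality. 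Symmetrically, $d_p(v_1,c_1)=\frac{\beta L}{2}$. The degenerate endpoints cause no trouble: for $\beta=1$ the hypothesis already gives $c_1=c_2$, and for $\beta=2$ it gives $c_2=v_1$ and $c_1=v_2$.

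The key step is to convert these equalities into collinearity. The $l_p$ norm with $1<p<\infty$ is strictly convex, so the triangle inequality $\|u+w\|\le\|u\|+\|w\|$ holds with equality only when $u$ and $w$ are nonnegative multiples of one another; equivalently, $d_p(a,c)=d_p(a,b)+d_p(b,c)$ forces $b$ to lie on the straight segment with endpoints $a$ and $c$. Applying this to the equality coming from $\{v_1,v_2,c_2\}$ places $c_2$ on the line through $v_1$ and $v_2$, and then the already-known values $d_p(v_1,c_2)$ and $d_p(v_2,c_2)$ --- together with the fact that in any norm the distance scales linearly along a segment --- pin $c_2$ down to $\frac{\beta}{2}v_1+(1-\frac{\beta}{2})v_2$; the equality coming from $\{v_2,c_1,c_2\}$ then puts $c_1$ on the same line and fixes it at $(1-\frac{\beta}{2})v_1+\frac{\beta}{2}v_2$. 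These are precisely the centres of Definition~\ref{betaskeleton}, which completes the argument.

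I expect the only real obstacle to be bookkeeping: setting up the two sandwiching triangle inequalities with the correct orientation in both ranges of $\beta$ (for $\beta\le 2$ the "between" point of the tight triangle is $c_2$, while for $\beta\ge 2$ it is $v_1$), and cleanly isolating the elementary fact about strict convexity of $l_p$, $1<p<\infty$, that turns an equality in the triangle inequality into collinearity. Everything else is linear algebra on the line through $v_1$ and $v_2$.
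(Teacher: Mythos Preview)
Your proposal is correct and follows essentially the same route as the paper: a case split on $1\le\beta\le 2$ versus $\beta\ge 2$, a triangle-inequality sandwich that forces $d_p(v_i,c_i)=\frac{\beta L}{2}$, and then the observation that the resulting equalities in the triangle inequality, together with strict convexity of $l_p$ for $1<p<\infty$, yield collinearity of $v_1,v_2,c_1,c_2$. The only cosmetic difference is order: the paper first writes the three-term sum that witnesses collinearity and then does the sandwich for the radii, whereas you do the sandwich first and extract collinearity from its tightness; you are also more explicit than the paper about invoking strict convexity and about pinning $c_1,c_2$ to the exact affine combinations in Definition~\ref{betaskeleton}.
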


\begin{proof}
For $1\leq \beta < 2$ we have
$d_p(v_2, c_1)+d_p(c_1, c_2)+d_p(v_1, c_2)= 
2|\frac{(\beta -2) d_p(v_1v_2)}{2}|+(\beta -1)d_p(v_1,v_2)=
(2-\beta)d_p(v_1,v_2)+(\beta -1)d_p(v_1,v_2)=d_p(v_1, v_2)$.
For $2 \leq \beta<\infty$ we have
$d_p(v_2, c_1)+d_p(v_1, v_2)+d_p(v_1, C_2)= 
2|\frac{(\beta -2) d_p(v_1v_2)}{2}|+d_p(v_1,v_2)=
(\beta-2)d_p(v_1,v_2)+d_p(v_1,v_2)=(\beta -1)d_p(v_1, v_2)=d_p(c_1, c_2)$.
Hence, the points $v_1, v_2, c_1, c_2$ are colinear.
Moreover, from the triangle inequality it follows that
in an arbitrary metric for $1 \leq \beta < 2$ is 
$\frac{\beta d_p(v_1v_2)}{2}=d_p(v_1, c_2)+d_p(c_1, c_2) \geq d_p(v_1, c_1) \geq 
d_p(v_1, v_2)-d_p(v_2, c_1)=\frac{\beta d_p(v_1v_2)}{2}$
and for $2 \leq \beta < \infty$ is
$\frac{\beta d_p(v_1v_2)}{2}=d_p(v_2, c_1)+d_p(v_1, v_2) \geq d_p(v_1, c_1) \geq 
d_p(c_1, c_2)-d_p(v_1, c_2)=\frac{\beta d_p(v_1v_2)}{2}$
The same relations occur for $d_p(v_2, c_2)$. 
Hence $d_p(v_1, c_1)=d_p(v_2, c_2)=\frac{\beta d_p(v_1v_2)}{2}$ (discs defining the lens
have the radius $\frac{\beta d_p(v_1v_2)}{2}$), i.e. the lens is $N_p(v_1,v_2, \beta)$.
\end{proof}

\begin{corollary}
\label{lp-distance}
Distance conditions in Definition \ref{betaskeleton} for $0<\beta<1$ and in Lemma \ref{lemata}
for $1 \leq \beta <\infty$ correctly define lenses of $\beta$-skeletons for $0 < \beta <\infty$.
The Painleve-Kuratowski convergence correctly defines lenses of $\beta$-skeletons for $\beta=0$ 
and $\beta=\infty$. 
\end{corollary}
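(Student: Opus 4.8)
The plan is to read the statement as an assembly of three parameter ranges, $0<\beta<1$, $1\le\beta<\infty$ and $\beta\in\{0,\infty\}$, and to derive each from material already available: Lemma~\ref{lemata}, the elementary distance computation carried out immediately before it, and the paragraph on $\lim_{\beta\to0}$ and $\lim_{\beta\to\infty}$ preceding Lemma~\ref{lemata}. The only genuinely new work is to check that the distance conditions single out the lens \emph{uniquely}, since that is what makes the construction well posed and, in turn, what makes the two Painleve-Kuratowski limits well defined.

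For $0<\beta<1$ there is in fact nothing to add to Definition~\ref{betaskeleton}(2) beyond a change of emphasis: requiring the boundary of a disc of radius $r=\frac{d_p(v_1v_2)}{2\beta}$ to pass through $v_1$ and $v_2$ is exactly the condition $d_p(c,v_1)=d_p(c,v_2)=r$ on its centre $c$. Because $\beta<1$ forces $r>\tfrac12 d_p(v_1,v_2)$ such centres exist, and because the $l_p$ ball is strictly convex for $1<p<\infty$ there are precisely two of them, mirror images across the line $v_1v_2$; either representative of this unordered pair produces the same intersection of two discs, so $N_p(v_1,v_2,\beta)$ is well defined by the distances alone. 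For $1\le\beta<\infty$ I would combine the two directions of the characterisation. The computation preceding Lemma~\ref{lemata} gives one direction: the centres named in Definition~\ref{betaskeleton}(3) satisfy $d_p(c_1,c_2)=(\beta-1)d_p(v_1,v_2)$ and $d_p(v_1,c_2)=d_p(v_2,c_1)=\bigl|\tfrac{(\beta-2)d_p(v_1v_2)}{2}\bigr|$. Lemma~\ref{lemata} gives the converse: any pair of points obeying these two equalities already \emph{is} a pair of centres of the discs that define $N_p(v_1,v_2,\beta)$. To see that this determines the lens, note that the proof of Lemma~\ref{lemata} forces $v_1,v_2,c_1,c_2$ to be collinear --- it is here that strict convexity of $l_p$ does the work, converting the equality cases of the triangle inequality into honest collinearity --- and that on that single line the prescribed distances leave only the harmless interchange $c_1\leftrightarrow c_2$ undetermined. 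Hence for every $0<\beta<\infty$ the distance conditions describe exactly one region, and it coincides with the lens of Definition~\ref{betaskeleton}.

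For the endpoints I would appeal to the paragraph preceding Lemma~\ref{lemata}: since $N_p(v_1,v_2,\beta)$ is uniquely defined for each $0<\beta<\infty$, and the family is monotone in $\beta$ near each endpoint (shrinking as $\beta\downarrow0$, expanding as $\beta\uparrow\infty$), the Painleve-Kuratowski limits exist, and it remains to identify them. As $\beta\downarrow0$ each defining disc is convex and carries $v_1$ and $v_2$ on its boundary, so the closed segment $v_1v_2$ is contained in every lens, while the transverse extent of the lens is $O(\beta)$ and it never passes beyond $v_1$ or $v_2$ along the line; hence $\lim_{\beta\to0}N_p(v_1,v_2,\beta)=v_1v_2=N_p(v_1,v_2,0)$. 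As $\beta\uparrow\infty$ each defining disc has $v_1$ (respectively $v_2$) on its boundary with centre receding to infinity along $v_1v_2$, so it increases to the half-plane bounded at $v_1$ (respectively $v_2$) by the disc's supporting line there; the intersection of these two half-planes is the strip $N_p(v_1,v_2,\infty)$ of Definition~\ref{betaskeleton}(4), whence $\lim_{\beta\to\infty}N_p(v_1,v_2,\beta)=N_p(v_1,v_2,\infty)$. The step I expect to be the real obstacle is precisely this last identification: turning the phrases ``transverse extent $O(\beta)$'' and ``recedes to a half-plane'' into a quantitative Hausdorff estimate (for instance by sandwiching the $l_p$ balls between Euclidean balls of comparable radii), which is what upgrades the pointwise picture to genuine Painleve-Kuratowski convergence; the rest is bookkeeping around Lemma~\ref{lemata} and strict convexity.
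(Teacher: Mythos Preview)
Your proposal is correct and follows essentially the same route the paper intends: the corollary is stated in the paper without any proof, as an immediate assembly of Definition~\ref{betaskeleton}(2), Lemma~\ref{lemata}, and the paragraph on Painlev\'e--Kuratowski limits that precedes the lemma. You simply make explicit what the paper leaves implicit---in particular the appeal to strict convexity of $l_p$ ($1<p<\infty$) for uniqueness of the centres and the monotonicity argument for the endpoint limits---so there is nothing to correct.
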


\section{Not uniquely defined centers of the discs determining lens}
\label{l1-section}

Note that for metrics $l_1$ and $l_{\infty}$ corollary \ref{lp-distance} is not true,
since centers of discs determining a lens can be defined not uniquely.
Let us assume that points belonging to $V$ are in general position.
For $1 \leq \beta < \infty$, intersection of circles centered in $v_1$ and $v_2$ with
radiuses $|\frac{(\beta -2) d_p(v_1v_2)}{2}|$ and $\frac{\beta d_p(v_1v_2)}{2}$,
respectively, contains may points. 
If a line passing through $v_1$ and $v_2$ is parallel to sides of squares being discs 
in $l_1$ or $l_{\infty}$ then centers of discs determining a lens for $0 < \beta < 1$ 
are not uniquely defined too.

We will concentrate on $l_1$ metric and describe what should be modified in the definition 
of the $\beta$-skeleton to be able to use it also in this metric. The consideration concerning
$l_{\infty}$ metric are exactly the same. 

In this case we analyze lens defined by discs with centers satisfying the conditions below.

For $v_1, v_2 \in V$ and $0<\beta<1$, the points $c_1$ and $c_2$ are centers of discs defining
$N_1(v_1, v_2,\beta)$ when 
$d_1(c_1,v_1)=d_1(c_2,v_1)=d_1(c_1,v_2)=d_1(c_2,v_2)=\frac{d_1(v_1,v_2)}{2 \beta}$ 
and each shortest path connecting $c_1$ and $c_2$ intersects some shortest path between 
$v_1$ and $v_2$ (the second condition guarantees that $v_1$ and $v_2$
are ends of the lens just like in $l_p$). 
Note that the shortest path between two points in $l_1$ does not have to be a segment.

For $1 \leq \beta < \infty$ the points $c_1$ and $c_2$ are centers of discs defining a lens
$N_1(v_1, v_2,\beta)$ when $d_p(c_1, c_2)=(\beta -1)d_p(v_1,v_2)$ and 
$d_p(v_1, c_2)=d_p(v_2,c_1)=|\frac{(\beta -2) d_p(v_1v_2)}{2}|$ .

From now on we will alternately use a symbol $N_1(v_1, v_2,\beta)$ for description
of concrete lens (one element set) or for a set of lenses defined for an edge $v_1v_2$. 

\begin{lemma}
\label{l1-01}
For $v_1,v_2 \in V$ and for $0<\beta<\beta'< 1$ there is $N_1(v_1, v_2,\beta)=N_1(v_1, v_2,\beta')$.
\end{lemma}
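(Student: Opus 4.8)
The plan is to prove the stronger statement that $N_1(v_1,v_2,\beta)$ is \emph{the same one-element family} for every $\beta\in(0,1)$, so in particular it cannot depend on $\beta$. It is convenient to pass to the $l_\infty$ metric: the map $\Phi(x,y)=(x+y,x-y)$ is an isometry of $(R^2,l_1)$ onto $(R^2,l_\infty)$, and in $l_\infty$ every disc is an axis-parallel square, so the intersection of two discs of a common radius $r$ is an axis-parallel rectangle which can be written down explicitly. After applying $\Phi$ and a suitable isometry of $(R^2,l_\infty)$ we may assume $v_1=(0,0)$ and $v_2=(a,b)$ with $a=d_\infty(v_1,v_2)\ge b\ge 0$; put $r=\frac{d_1(v_1,v_2)}{2\beta}$. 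The only property of $r$ used below is $r>\frac a2$, which is exactly the restriction $\beta<1$.

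First I would dispose of the generic case $b>0$ (the line $v_1v_2$ is not parallel to a side of an $l_1$-square). Solving $d_\infty(c,v_1)=d_\infty(c,v_2)=r$ and using $0<a<2r$ shows that the only admissible centres are $c_A=(r,b-r)$ and $c_B=(a-r,r)$, so $(c_A,c_B)$ is the unique candidate pair; one verifies that it satisfies the shortest-path requirement, and then $D_\infty(c_A,r)\cap D_\infty(c_B,r)=[0,a]\times[0,b]$ -- the bounding box of $v_1$ and $v_2$, which is manifestly independent of $r$. Hence $N_1(v_1,v_2,\beta)=\{\,\Phi^{-1}([0,a]\times[0,b])\,\}$ for every $\beta\in(0,1)$.

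The remaining case $b=0$, where the line $v_1v_2$ \emph{is} parallel to a side of an $l_1$-square, is the main obstacle, since now the admissible centres form a continuum -- the two segments $\{(x,\varepsilon r):a-r\le x\le r\}$, $\varepsilon=\pm1$ -- and the shortest-path condition has to be used to trim the family; this requires handling the non-unique $l_\infty$-geodesics. I would rely on three observations: (i) the union of all geodesics from $v_1$ to $v_2$ is $H=\{p:d_\infty(v_1,p)+d_\infty(p,v_2)=a\}$, a convex region lying in the strip $0\le p_x\le a$ and having $v_1,v_2$ among its vertices; (ii) because every point of $H$ lies on some $v_1v_2$-geodesic, a pair of centres meets the requirement iff every geodesic joining them meets $H$; (iii) an $l_\infty$-geodesic between two points is described by letting the governing coordinate run monotonically while the other coordinate follows a $1$-Lipschitz profile, and, using $r>\frac a2$, one checks that for a pair of centres on the same segment the straight horizontal geodesic avoids $H$ entirely, while for a pair $(p,r),(q,-r)$ with $\{p,q\}\neq\{a-r,r\}$ the profile can be chosen so that the geodesic bulges past $H$ on the left or on the right. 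Hence only the pairs consisting of $(r,\varepsilon r)$ and $(a-r,-\varepsilon r)$ survive; for these every geodesic crosses the $x$-axis inside $[0,a]\times\{0\}\subseteq H$, so the requirement does hold, and since $D_\infty(c_1,r)\cap D_\infty(c_2,r)=[0,a]\times\{0\}$ is the segment $v_1v_2$, we again find that $N_1(v_1,v_2,\beta)$ is the same singleton for every $\beta\in(0,1)$. Combining the two cases proves the lemma.
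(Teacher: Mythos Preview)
Your argument is correct and follows the same two-case structure as the paper's proof (line $v_1v_2$ not parallel, respectively parallel, to a side of the $l_1$-ball), reaching the identical conclusions that the lens is the axis-parallel rectangle on $v_1,v_2$ in the first case and the segment $v_1v_2$ in the second. The only methodological difference is that you pass through the isometry $\Phi:(R^2,l_1)\to(R^2,l_\infty)$ and carry out an explicit geodesic analysis to isolate the two admissible centre pairs in the degenerate case, whereas the paper simply asserts these facts with reference to a figure; your version is therefore more detailed but not a different approach.
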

\begin{proof}
Note that if points $v_1$ and $v_2$ are not on a line parallel to sides of a square being a circle 
in $l_1$ then points $c_1$ and $c_2$ are defined uniquely and the lens $N_1(v_1, v_2,\beta)$ 
is a rectangle with opposite vertices in points $v_1, v_2$ and sides parallel to sides of a square 
being a circle in $l_1$ (see Figure \ref{fig:l_1}).\\
In other case, there are more than two points belonging to the intersection of discs with radius 
$\frac{d_1(v_1,v_2)}{2 \beta}$ and centers in points $v_1$ and $v_2$. 
However, there are only two pairs of points $c_1, c_2$ for which each shortest path connecting 
$c_1$ and $c_2$ intersects some shortest path between $v_1$ and $v_2$. Both pairs define 
the same lens and $N_1(v_1, v_2,\beta)=v_1v_2$.   
\end{proof}

Note that $N_1(v_1, v_2,0)$ is uniquely defined for any $v_1,v_2$ 
and $N_1(v_1, v_2,0)=N_1(v_1, v_2,\beta)$ for $0<\beta<1$.\\

\begin{lemma}
\label{l1-betabig}
For any two points $v_1,v_2 \in V$ a lens $N_1(v_1,v_2,2)$ is uniquely defined. 
For $\beta > 2$ there exists a lens $N \in N_1(v_1,v_2,\beta)$ such that $N=N_1(v_1,v_2,2)$. 
Moreover,  $N_1(v_1,v_2,2)$ is contained in every lens in $N_1(v_1,v_2,\beta)$.
\end{lemma}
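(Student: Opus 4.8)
The plan is to analyze directly the structure of the lens $N_1(v_1,v_2,\beta)$ in the $l_1$ metric for $\beta \geq 2$, using the two defining distance conditions $d_1(c_1,c_2)=(\beta-1)d_1(v_1,v_2)$ and $d_1(v_1,c_2)=d_1(v_2,c_1)=\frac{(\beta-2)d_1(v_1,v_2)}{2}$. First I would dispose of the case $\beta=2$: here the second condition forces $d_1(v_1,c_2)=d_1(v_2,c_1)=0$, so $c_1=v_2$ and $c_2=v_1$, and the lens is the intersection of the two $l_1$-discs of radius $d_1(v_1,v_2)$ centred at $v_2$ and at $v_1$ respectively. These two discs are uniquely determined, hence $N_1(v_1,v_2,2)$ is uniquely defined — this is exactly the $l_1$ analogue of the Gabriel/relative-neighborhood region, and it is a fixed set depending only on $v_1,v_2$.

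Next, for $\beta>2$ I would describe the freedom in choosing $c_1,c_2$. Since $d_1(v_2,c_1)=\frac{(\beta-2)}{2}d_1(v_1,v_2)>0$, the point $c_1$ may lie anywhere on the $l_1$-circle of that radius about $v_2$ (and symmetrically for $c_2$ about $v_1$), subject to the coupling $d_1(c_1,c_2)=(\beta-1)d_1(v_1,v_2)$. The key observation — which I would state as the heart of the argument — is that among all admissible pairs there is the collinear pair obtained by placing $c_1,c_2$ on a common $l_1$-geodesic through $v_1$ and $v_2$, on the far sides of $v_2$ and $v_1$ respectively; for this choice one checks $d_1(v_1,c_1)=d_1(v_1,v_2)+d_1(v_2,c_1)=\frac{\beta}{2}d_1(v_1,v_2)$ and likewise $d_1(v_2,c_2)=\frac{\beta}{2}d_1(v_1,v_2)$, so the two discs have the correct radius $\frac{\beta d_1(v_1,v_2)}{2}$ and, exactly as in the computation preceding Lemma \ref{lemata}, the extremes of their intersection are $v_1$ and $v_2$. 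Moreover, because $c_1$ can be taken on the segment from $v_2$ toward $c_2$ at distance $\frac{(\beta-2)}{2}d_1(v_1,v_2)$ while still satisfying the coupling, this collinear configuration yields precisely the discs $D(v_2,d_1(v_1,v_2))$ shifted outward, and the resulting lens contains — indeed a short triangle-inequality argument shows it equals — the $\beta=2$ lens along the narrow cross-section, so $N=N_1(v_1,v_2,2)$ for this particular lens $N\in N_1(v_1,v_2,\beta)$.

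For the final, strongest claim — that $N_1(v_1,v_2,2)$ is contained in \emph{every} lens in $N_1(v_1,v_2,\beta)$ — I would argue monotonicity in the radius together with a containment of centres. Fix any admissible pair $c_1,c_2$ for parameter $\beta$. One shows first, via the triangle inequality applied to $d_1(v_1,c_1)$ using $d_1(v_1,c_2)+d_1(c_2,c_1)\geq d_1(v_1,c_1)\geq d_1(c_1,c_2)-d_1(v_1,c_2)$, that $d_1(v_1,c_1)=\frac{\beta}{2}d_1(v_1,v_2)$ always (this is the $l_1$ version of the radius computation in Lemma \ref{lemata}, and it does not need collinearity). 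Then I would prove that the $l_1$-ball $B(v_2,d_1(v_1,v_2))$ is contained in $B(c_1,\frac{\beta}{2}d_1(v_1,v_2))$: any point $x$ with $d_1(x,v_2)\leq d_1(v_1,v_2)$ satisfies $d_1(x,c_1)\leq d_1(x,v_2)+d_1(v_2,c_1)\leq d_1(v_1,v_2)+\frac{\beta-2}{2}d_1(v_1,v_2)=\frac{\beta}{2}d_1(v_1,v_2)$, and symmetrically $B(v_1,d_1(v_1,v_2))\subseteq B(c_2,\frac{\beta}{2}d_1(v_1,v_2))$. Intersecting these two inclusions gives $N_1(v_1,v_2,2)\subseteq N$ for the arbitrary lens $N$, which is exactly the assertion.

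The main obstacle I anticipate is not any of the triangle-inequality estimates, which are routine, but rather pinning down the correct notion of "admissible pair $(c_1,c_2)$" in $l_1$ and making sure the outward-collinear configuration genuinely realizes the parameter-$2$ lens rather than some strictly larger or differently shaped region; in particular one must be careful that when the line $v_1v_2$ is axis-parallel the $l_1$-circles degenerate and several of the containments above can become equalities, so the phrase "$N=N_1(v_1,v_2,2)$" in the middle assertion has to be read with the understanding that at least one admissible choice collapses the lens to the $\beta=2$ shape while all others only contain it.
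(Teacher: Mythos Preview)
Your argument for the third claim --- that $N_1(v_1,v_2,2)$ is contained in every lens $N\in N_1(v_1,v_2,\beta)$ --- is correct and in fact cleaner than the paper's. The paper argues via coordinates (lenses are rectangles in rotated axes, and the vertices ``drive away'' as $|y(c_1)-y(c_2)|$ grows), whereas your two-line triangle-inequality computation $d_1(x,c_1)\le d_1(x,v_2)+d_1(v_2,c_1)\le d_1(v_1,v_2)+\tfrac{\beta-2}{2}d_1(v_1,v_2)$ gives the inclusion $B(v_2,d)\subseteq B(c_1,\tfrac{\beta}{2}d)$ directly, in any metric. This buys generality and avoids the case analysis on the dominant coordinate.

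However, there is a genuine gap in your existence argument for the second claim. You propose to place $c_1,c_2$ on ``a common $l_1$-geodesic through $v_1$ and $v_2$'' and then assert that ``a short triangle-inequality argument shows'' the resulting lens equals $N_1(v_1,v_2,2)$. This is false for the most natural such placement. Take $v_1=(0,0)$, $v_2=(3,1)$, $\beta=4$, and put $c_1,c_2$ on the Euclidean line through $v_1,v_2$ (which is certainly an $l_1$-geodesic): one gets $c_1=(6,2)$, $c_2=(-3,-1)$, and the lens $B(c_1,8)\cap B(c_2,8)$ is bounded by $y-x\le 4$, whereas $N_1(v_1,v_2,2)$ is bounded by $y-x\le 2$. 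So the collinear lens is strictly larger. The triangle inequality gives you containment one way, as you showed, but it cannot give you the reverse inclusion; equality holds only for a very particular $l_1$-geodesic, namely the one where $c_1$ shares its $y$-coordinate with $v_2$ and $c_2$ with $v_1$ (when $|x(v_1)-x(v_2)|\ge|y(v_1)-y(v_2)|$). This is exactly the coordinate choice the paper makes. You correctly flagged this step as the main obstacle, but the difficulty is generic, not confined to the axis-parallel case you singled out, and it cannot be handled by a metric argument alone --- you need the explicit $l_1$ geometry of diamonds.
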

\begin{proof} 
For $\beta=2$ we have $c_1=v_2$ and $c_2=v_1$.
Let $y(v)$ ($x(v)$, respectively) denote a value of $y$-coordinate ($x$-coordinate, respectively)
of a point $v$. Let us assume that $|y(v_1)-y(v_2)| \leq |x(v_1)-x(v_2)|$.
It is easy to notice that for $\beta > 2$ and centers $c_1, c_2$ of discs defining a lens 
$N \in N_1(v_1,v_2,\beta)$ such that
$y(c_1)=y(v_1)$ and $y(c_2)=y(v_1)$ we have $N=N_1(v_1,v_2,2)$.
Lenses in $N_1(v_1,v_2,\beta)$ are an rectangles. Points $v_1$ and $v_2$ lie on their opposite sides
(see Figure \ref{fig:l_1}). Since $d_p(c_1, c_2)=(\beta -1)d_p(v_1,v_2)$ we have 
$|y(c_1)-y(c_2)| \geq |y(v_1)-y(v_2)|$. If the difference $|y(c_1)-y(c_2)|$ grows then 
the opposite vertices of the lens drive away. Hence $N_1(v_1,v_2,2)$ is contained in each lens 
from $N_1(v_1,v_2,\beta)$. A similar argumentation in respect of $x$-coordinates is true when 
$|x(v_1)-x(v_2)|<|y(v_1)-y(v_2)|$.        
\end{proof} 

\begin{lemma}
\label{l1-twolenses}
For $v_1,v_2 \in V$ and $1 < \beta < 2$ all lenses from $N_1(v_1,v_2,\beta)$ are
included in $N_1(v_1,v_2,2)$.
For $1 < \beta < 2$ there exist two lenses $N, N' \in N_1(v_1,v_2,\beta)$ such that each of them 
does not contain any lens from $N_1(v_1,v_2,\beta) \setminus \{N, N'\}$, $N \cup N' = N_1(v_1,v_2,2)$
and $N \cap N' \neq \emptyset$. 
\end{lemma}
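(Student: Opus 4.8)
The plan is to reuse the geometric reduction behind Lemma~\ref{l1-betabig}. Rotating by $45^{\circ}$ (equivalently, passing to coordinates $u=x+y$, $v=x-y$) turns $d_1$ into the $l_\infty$ distance, every $l_1$-disc into an axis-parallel square, and hence every lens in $N_1(v_1,v_2,\beta)$ into an axis-parallel rectangle. After a reflection we may assume $v_1=(0,0)$ and $v_2=(D,E)$ in these coordinates, with $D=d_1(v_1,v_2)\ge E\ge 0$; general position means $E>0$. Put $\rho=\frac{(2-\beta)D}{2}$, $r=\frac{\beta D}{2}$ (the disc radius), $s=(\beta-1)D$, so $\rho+r=D$ and $r-\rho=s$.

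First I would fix the centres in one coordinate. Writing $c_i=(\gamma_i,\delta_i)$, the condition $d_1(c_1,v_2)=\rho$ forces $\gamma_1\in[D-\rho,D+\rho]$ and $d_1(c_2,v_1)=\rho$ forces $\gamma_2\in[-\rho,\rho]$, so $\gamma_1-\gamma_2\ge D-2\rho=s$; since $d_1(c_1,c_2)=(\beta-1)d_1(v_1,v_2)$ reads $\max(|\gamma_1-\gamma_2|,|\delta_1-\delta_2|)=s$, this forces $\gamma_1=D-\rho$, $\gamma_2=\rho$. Hence the first-coordinate extent of every lens is $[\gamma_1-r,\gamma_2+r]=[0,D]$: all lenses share the pair of sides carrying $v_1$ and $v_2$ on opposite edges, and only the orthogonal pair may move. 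The remaining freedom is $\delta_1\in[E-\rho,E+\rho]$, $\delta_2\in[-\rho,\rho]$ with $|\delta_1-\delta_2|\le s$, and each lens is $[0,D]\times[\max(\delta_1,\delta_2)-r,\ \min(\delta_1,\delta_2)+r]$.

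For the inclusion I would bound the moving interval directly: $\min(\delta_1,\delta_2)\le\rho$ gives $\min(\delta_1,\delta_2)+r\le D$, while $\max(\delta_1,\delta_2)\ge E-\rho$ gives $\max(\delta_1,\delta_2)-r\ge E-D$; since for $\beta=2$ one has $c_1=v_2$, $c_2=v_1$, $r=D$ (Lemma~\ref{l1-betabig}), so that $N_1(v_1,v_2,2)=[0,D]\times[E-D,D]$, every lens is contained in $N_1(v_1,v_2,2)$. For the second part I would take $N$ to be the lens obtained by pushing the lower endpoint down to $E-D$ (that is, $\delta_1=E-\rho$) and then $\delta_2$ as far down as its constraints allow, and $N'$ the mirror construction (push the upper endpoint up to $D$, i.e. $\delta_2=\rho$, then $\delta_1$ up as far as allowed). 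A short computation with $\delta_1\ge E-\rho$, $\delta_2\le\rho$, $|\delta_1-\delta_2|\le s$ gives $N=[0,D]\times[E-D,\max(E,s)]$ and $N'=[0,D]\times[\min(0,E-s),D]$. Then $N\cup N'=[0,D]\times[E-D,D]=N_1(v_1,v_2,2)$ since the two intervals overlap and cover $[E-D,D]$, and $N\cap N'=[0,D]\times[\min(0,E-s),\max(E,s)]\ne\emptyset$ exactly because $E>0$. Finally, at each extremal configuration the inequality being pushed to equality simultaneously pins the other centre coordinate to an endpoint of its range, so $N$ (resp. $N'$) is the unique lens with that lower (resp. upper) endpoint; in particular it contains no lens other than itself, which yields the stated non-containment property for $\{N,N'\}$.

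The main obstacle is the bookkeeping compressed into ``a short computation'': one must split on whether $E\ge 2\rho$ (so $\delta_1\ge\delta_2$ and $\max,\min$ simplify) or $E<2\rho$ (where $\delta_1,\delta_2$ may occur in either order, giving a mirrored sub-family), and on the sign of $E-s$; and one must check that the constraint $|\delta_1-\delta_2|\le s$ is genuinely active, since it is precisely this, together with $E>0$, that makes the two extremal lenses well defined and forces $N\cap N'\ne\emptyset$. I would treat the $E<2\rho$ branch by the symmetry $v_1\leftrightarrow v_2,\ \delta\mapsto-\delta$ to reduce the number of cases, and dispose of the boundary situations ($E=s$, and the degenerate $E=D$ where the lens is already unique and $N=N'$) separately.
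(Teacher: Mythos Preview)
Your approach is essentially the paper's: parametrize the admissible centre pairs, observe that every lens is a rectangle sharing two opposite sides with $N_1(v_1,v_2,2)$, and pick the two lenses that abut the remaining two sides. The $45^\circ$ rotation to $l_\infty$ is a helpful simplification (the paper works directly in $l_1$ with diamond-shaped discs), and your explicit formulas make the inclusion, the union, and the intersection checks cleaner than the paper's verbal argument; on those three points your proof is correct and in fact more detailed than the paper's.

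There is, however, a genuine gap in the non-containment step, precisely in the regime $E<s$ that you flag as an obstacle. Your claim that fixing $\delta_1=E-\rho$ ``simultaneously pins the other centre coordinate to an endpoint of its range'' is false there: with $\delta_1=E-\rho$ any $\delta_2\in[-\rho,\,E-\rho]$ is admissible (the constraint $|\delta_1-\delta_2|\le s$ is slack because $E<s$), so $N$ is not the unique lens with lower endpoint $E-D$. Worse, keeping $\delta_2=-\rho$ and increasing $\delta_1$ above $E-\rho$ produces lenses $[0,D]\times[\delta_1-r,\,s]$ strictly inside your $N=[0,D]\times[E-D,s]$. Concretely, take $D=3$, $E=1$, $\beta=3/2$, so $\rho=3/4$, $r=9/4$, $s=3/2$; then $\delta_1=1/2,\ \delta_2=-3/4$ is admissible and yields the lens $[0,3]\times[-7/4,\,3/2]\subsetneq N=[0,3]\times[-2,\,3/2]$, and this lens is not $N'=[0,3]\times[-1/2,\,3]$ either. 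The paper's proof is no more careful here: its extremal configuration $y(c_1)=y(c_2)=y(v_1)$ requires $\rho\ge |y(v_1)-y(v_2)|$, which translates to $E\ge s$ in your coordinates and fails in the same example. So your argument matches the paper's level of rigour on this clause; neither actually verifies the non-containment property when $E<s$, and your proposed case split would not rescue it, since in that regime the length-$D$ minimal lenses only sweep out $[0,D]\times[-2\rho,\,E+2\rho]\subsetneq N_1(v_1,v_2,2)$, so no pair of lenses can simultaneously satisfy the covering and the minimality requirements.
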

\begin{proof}
Let us assume that $|y(v_1)-y(v_2)| \leq |x(v_1)-x(v_2)|$ and $x(v_1)<x(v_2), y(v_1)<y(v_2)$.
Centers of discs determining lenses in $N_1(v_1,v_2,\beta)$ belong to the rectangle 
$[x(v_1),x(v_2)] \times [y(v_1),y(v_2)]$. Therefore each such a disc is contained in one of the discs
determining $N_1(v_1,v_2,2)$. Hence, each lens from $N_1(v_1,v_2,\beta)$ is included 
in $N_1(v_1,v_2,2)$. 
Let $c_1, c_2$ be centers of discs determining a lens in $N_1(v_1,v_2,\beta)$. The lens has
minimum size when $x(c_1)=x(c_2)$ or $y(c_1)=y(c_2)$. Those lenses are in extreme positions
when $y(c_1)=y(c_2)=y(v_1)$ or $y(c_1)=y(c_2)=y(v_2)$. Three sides of the first and the second
lens are contained in sides of $N_1(v_1,v_2,2)$. Both lenses contain vertices $v_1$ and $v_2$.
Hence, their intersection is not empty and their sum is $N_1(v_1,v_2,2)$. 
A similar argumentation in respect of $x$-coordinates is true when $|x(v_1)-x(v_2)|<|y(v_1)-y(v_2)|$.
\end{proof}

\begin{figure}[htbp]
\centering
\includegraphics[scale=0.3]{./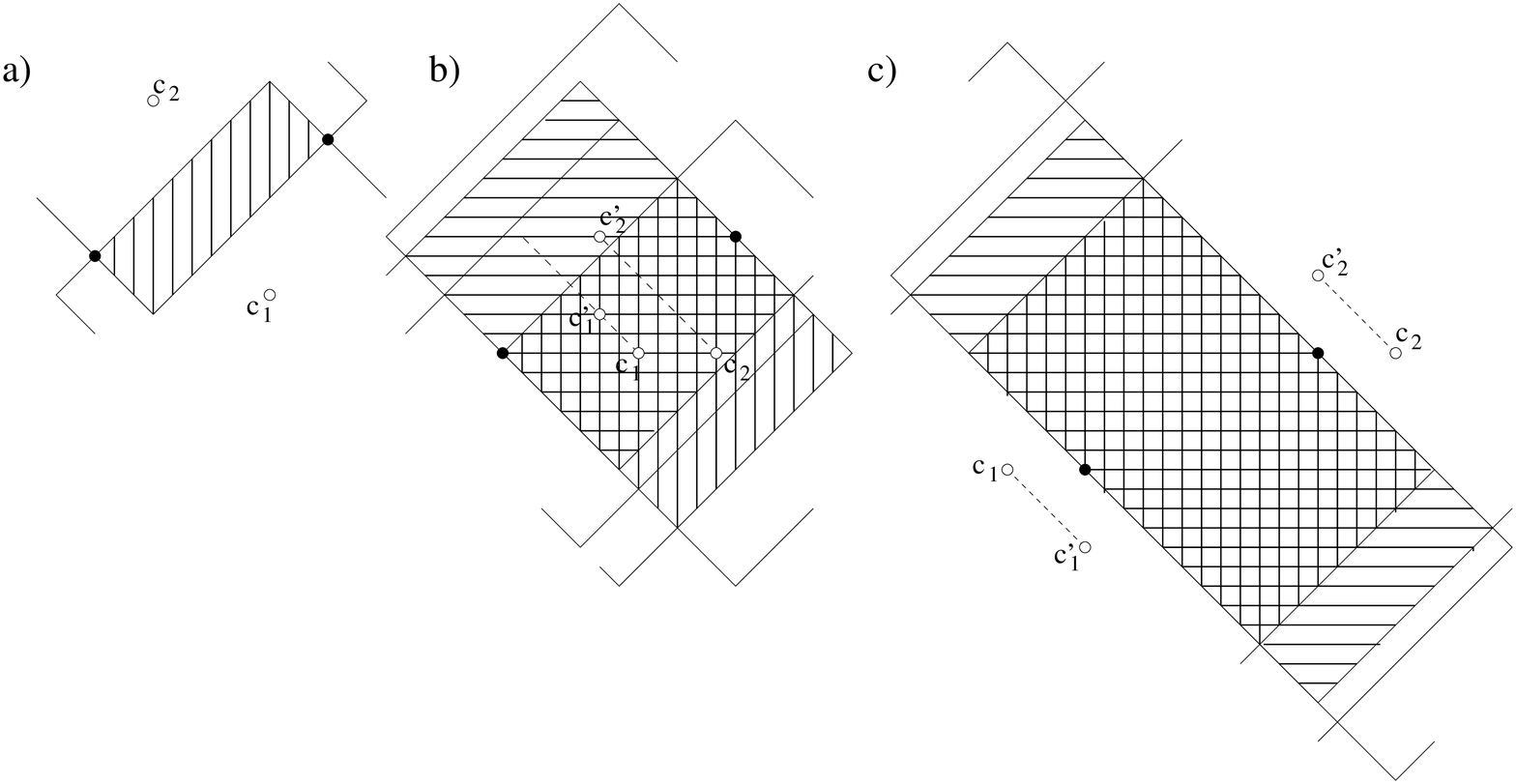}
\caption{Lenses in $l_1$ metric: (a) $0 < \beta < 1$, (b) $1 <\beta < 2$,
(c) $2 < \beta < \infty$.}
\label{fig:l_1}
\end{figure}

\begin{corollary}
\label{equality}
For a given set of points $V$ and the parameter $0 \leq \beta \leq \infty$ the graph 
$G_{\beta}^1(V)$ is a set of edges such that $v_1v_2 \in G_{\beta}^1(V)$ if and only if 
there exists a lens $N_1(v_1, v_2,\beta)$ that no point from $V \setminus \{v_1, v_2\}$ belongs 
to it. $G_{\beta}^1(V)$ is called $\beta$-skeleton in $l_1$.
For any $\beta < 1$ we have $G_{\beta}^1(V)=G_{0}^1(V)$ and for any $\beta \geq 2$ there is 
$G_{\beta}^1(V)=G_{2}^1(V)$.
\end{corollary}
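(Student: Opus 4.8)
The statement really has two pieces. The first sentence just transcribes Definition \ref{betaskeleton} to the situation of Section \ref{l1-section}, where $N_1(v_1,v_2,\beta)$ denotes a \emph{family} of lenses attached to the edge $v_1v_2$ rather than a single region; once one agrees that ``$v_1v_2$ is an edge'' means ``some lens of that family is disjoint from $V\setminus\{v_1,v_2\}$'', there is nothing to prove there. The content to be established is the two stability claims $G_\beta^1(V)=G_0^1(V)$ for $\beta<1$ and $G_\beta^1(V)=G_2^1(V)$ for $\beta\geq 2$, and both reduce to one observation: membership of an edge is \emph{existential} over the family, and emptiness of a region is monotone under inclusion, so if the family of lenses for a given edge possesses a $\subseteq$-minimum belonging to the family, it suffices to test that minimal lens.

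For $\beta<1$ I would invoke Lemma \ref{l1-01} together with the remark immediately following it: for every pair $v_1,v_2$ and all $0\le\beta<\beta'<1$ one has $N_1(v_1,v_2,\beta)=N_1(v_1,v_2,\beta')=N_1(v_1,v_2,0)$, as sets of lenses. Thus the family of lenses associated with a fixed edge is literally the same set for every $\beta\in[0,1)$, so the predicate ``some lens of the family avoids the other points of $V$'' defines the same relation for all these $\beta$, and therefore $G_\beta^1(V)=G_0^1(V)$.

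For $\beta\geq 2$ the case $\beta=2$ is trivial, so fix $\beta>2$ and an edge $v_1v_2$. By Lemma \ref{l1-betabig} the lens $N_1(v_1,v_2,2)$ is well defined, it occurs as a member of the family $N_1(v_1,v_2,\beta)$ (for the appropriate choice of centers described there), and it is contained in every lens of that family. Consequently: if $N_1(v_1,v_2,2)$ contains no point of $V\setminus\{v_1,v_2\}$, then it is itself a witnessing empty lens for parameter $\beta$, so $v_1v_2\in G_\beta^1(V)$; conversely, if some $N\in N_1(v_1,v_2,\beta)$ contains no point of $V\setminus\{v_1,v_2\}$, then $N_1(v_1,v_2,2)\subseteq N$ contains none either, so $v_1v_2\in G_2^1(V)$. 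Hence $G_\beta^1(V)=G_2^1(V)$.

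The proof has no genuinely hard step; the only point demanding care is keeping the quantifier straight — ``edge present'' means ``there exists an empty lens in the family'', not ``all lenses in the family are empty'' — after which the argument is just: the family always contains its own inclusion-minimum (the unique lens $N_1(v_1,v_2,0)$ when $\beta<1$, the lens $N_1(v_1,v_2,2)$ when $\beta\ge 2$), so testing that single lens decides membership. Note that Lemma \ref{l1-twolenses} is not used here: it concerns the range $1<\beta<2$, about which the corollary makes no assertion, and there the family has two minimal lenses rather than one.
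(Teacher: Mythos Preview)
Your proposal is correct and matches the paper's intent: the paper states this result as a corollary with no explicit proof, leaving it to be read off directly from Lemma~\ref{l1-01} (together with the remark after it) for $\beta<1$ and from Lemma~\ref{l1-betabig} for $\beta\ge 2$, exactly as you do. Your observation that Lemma~\ref{l1-twolenses} is irrelevant here is also right.
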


Finally, we can prove that:

\begin{lemma}
\label{finalequalityl1}
For $1 \leq \beta \leq \beta' \leq 2$ in $l_1$ metric we have:\\
$MST(V) \subseteq G_{\beta'}^1(V) \subseteq G_{\beta}^1(V) \subseteq DT(V)$.
\end{lemma}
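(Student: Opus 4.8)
The plan is to split the claimed chain into its three inclusions and to treat the middle one first, because the other two reduce to it. Throughout I use Corollary~\ref{equality}: an edge $v_1v_2$ lies in $G^1_\gamma(V)$ if and only if some lens in $N_1(v_1,v_2,\gamma)$ contains no point of $V\setminus\{v_1,v_2\}$.

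\textbf{Middle inclusion $G^1_{\beta'}(V)\subseteq G^1_{\beta}(V)$ for $1\le\beta\le\beta'\le2$.} I would reduce this to a single geometric statement: \emph{for every lens $N'\in N_1(v_1,v_2,\beta')$ there is a lens $N\in N_1(v_1,v_2,\beta)$ with $N\subseteq N'$.} Granting this, if $N'$ is empty then so is $N$, so $v_1v_2\in G^1_\beta(V)$. To prove the statement I would invoke the description of $l_1$-lenses from Lemmas~\ref{l1-01}, \ref{l1-betabig} and \ref{l1-twolenses}: when $\beta'=2$ the lens is unique and, by the monotonicity argument in the center coordinates used for Lemma~\ref{l1-twolenses}, every $\beta$-lens with $1\le\beta<2$ is contained in it; when $1<\beta'<2$, Lemma~\ref{l1-twolenses} shows that the given rectangle $N'$ contains one of the two extreme lenses $M'\in N_1(v_1,v_2,\beta')$ (the ones whose defining centers satisfy $y(c_1)=y(c_2)\in\{y(v_1),y(v_2)\}$, or the symmetric condition with $x$), and it then remains to check that the extreme $\beta$-lens $M$ anchored at the same side satisfies $M\subseteq M'$. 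This last check is carried out by writing both rectangles out from the relations $d_1(c_1,c_2)=(\beta-1)d_1(v_1,v_2)$ and $d_1(v_1,c_2)=d_1(v_2,c_1)=|(\beta-2)d_1(v_1,v_2)/2|$ of Lemma~\ref{lemata}, and observing that this rectangle grows monotonically with $\beta$ on $[1,2]$, from the (possibly degenerate) $\beta=1$ lens up to $N_1(v_1,v_2,2)$. I expect this monotone-nesting verification to be the main obstacle; one must in particular be careful at $\beta=1$, where a lens degenerates from a rectangle to a single $l_1$-disc through $v_1$ and $v_2$, but it is still the smallest member of the family, so the inclusion persists.

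\textbf{First inclusion $MST(V)\subseteq G^1_{\beta'}(V)$.} Applying the middle inclusion with the pair $\beta'\le 2$ gives $G^1_2(V)\subseteq G^1_{\beta'}(V)$, so it suffices to prove $MST(V)\subseteq G^1_2(V)$. Here I would use the standard exchange argument, which is valid in any metric: let $e=v_1v_2\in MST(V)$ and let $N_1(v_1,v_2,2)$ be its (unique, by Lemma~\ref{l1-betabig}) lens, which is contained in the $l_1$-discs of radius $d_1(v_1,v_2)$ about $v_1$ and about $v_2$. If some $u\in V\setminus\{v_1,v_2\}$ lay in this lens, then $d_1(v_1,u)<d_1(v_1,v_2)$ and $d_1(v_2,u)<d_1(v_1,v_2)$ (general position excludes equalities); deleting $e$ splits $MST(V)$ into subtrees $T_1\ni v_1$, $T_2\ni v_2$, and reconnecting them through $u$ (by edge $v_1u$ if $u\in T_2$, else $v_2u$) yields a spanning tree of strictly smaller $l_1$-weight, a contradiction. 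Hence the lens is empty and $e\in G^1_2(V)$.

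\textbf{Last inclusion $G^1_{\beta}(V)\subseteq DT(V)$.} Applying the middle inclusion with target parameter $1$ gives $G^1_\beta(V)\subseteq G^1_1(V)$, so it suffices to prove $G^1_1(V)\subseteq DT(V)$. For $\beta=1$ the relations of Lemma~\ref{lemata} force $c_1=c_2=:c$ with $d_1(c,v_1)=d_1(c,v_2)=d_1(v_1,v_2)/2$, so every lens in $N_1(v_1,v_2,1)$ is an $l_1$-disc having $v_1$ and $v_2$ on its boundary. Therefore $v_1v_2\in G^1_1(V)$ means there is an $l_1$-disc with $v_1,v_2$ on its boundary and empty interior, which by definition of the $l_1$-Delaunay triangulation (the dual of the $l_1$-Voronoi diagram) is precisely the condition for $v_1v_2\in DT(V)$. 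Chaining the three parts gives $MST(V)\subseteq G^1_{\beta'}(V)\subseteq G^1_{\beta}(V)\subseteq DT(V)$.
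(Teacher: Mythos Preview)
Your decomposition into three inclusions is sound, and your treatments of $MST(V)\subseteq G^1_2(V)$ and $G^1_1(V)\subseteq DT(V)$ are correct and more explicit than the paper, which simply defers both to Kirkpatrick--Radke.

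The middle inclusion, however, has a genuine gap. You assert that Lemma~\ref{l1-twolenses} shows every $\beta'$-lens $N'$ contains one of the two extreme $\beta'$-lenses $M'$. The lemma says the opposite: the extreme lenses are \emph{minimal} (neither contains any other $\beta'$-lens), and their union covers $N_1(v_1,v_2,2)$; nothing there forces an arbitrary $\beta'$-lens to contain one of them. In fact the claim is false. Take $v_1=(0,0)$, $v_2=(3,1)$, $\beta'=3/2$. Then $c_1=(2.5,0.5)$, $c_2=(0.5,0.5)$ satisfy all the distance constraints and define a valid $\beta'$-lens. The extreme lens anchored at $y=y(v_1)=0$ (centers $(3,0),(1,0)$) has the vertex $(4,0)$, and the one anchored at $y=y(v_2)=1$ (centers $(2,1),(0,1)$) has the vertex $(-1,1)$; neither point lies in the disc $|x-0.5|+|y-0.5|\le 3$, respectively $|x-2.5|+|y-0.5|\le 3$, so neither extreme lens sits inside this particular $N'$. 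Your reduction to extremal lenses therefore breaks down for a generic $\beta'$-lens, and the ``monotone nesting of extreme lenses'' step, even if carried out, would not suffice.

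The paper avoids this detour entirely. Given centers $c_1,c_2$ of one lens it slides each center along an axis-parallel direction (keeping one coordinate fixed) while rescaling $d_1(v_i,c_i)$ by the factor $\beta'/\beta$, obtaining centers $c_1',c_2'$ whose $l_1$-discs contain the original ones; hence the corresponding $\beta'$-lens contains the $\beta$-lens. Read in the direction you need (start from an arbitrary empty $\beta'$-lens and shrink by $\beta/\beta'$), this produces a $\beta$-lens inside it without any appeal to extremal configurations. Replacing your extremal-lens reduction by this direct rescaling step would repair the argument; the rest of your proof then goes through.
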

\begin{proof}  
Let $c_1, c_2$ be centers of discs determining a lens $N_1(v_1,v_2,\beta)$.
Let $c_1'$ be a point such that $d_1(v_1,c_1')=\frac{\beta'}{\beta}d_1(v_1,c_1)$
and $d_1(v_1,c_1')+d_1(v_2,c_1')=d_1(v_1,v_2)$
and $x(c_1')=x(c_1)$ or $y(c_1')=y(c_1)$ (one of such choices is always possible).
Then disc with center in $c_1'$ and radius $d_1(v_1,c_1')$ contains a disc with center
in $c_1$ and radius $d_1(v_1,c_1)$. 
In the same way we define a disc center $c_2'$.  
Then $c_1', c_2'$ are centers of discs determining a lens $N_1(v_1,v_2,\beta')$
and $N_1(v_1,v_2,\beta) \subseteq N_1(v_1,v_2,\beta')$. 
Hence $G_{\beta'}^1(V) \subseteq G_{\beta}^1(V)$. \\
The other inclusions can be proved in the same way as in the paper by Kirkpatrick and Radke 
\cite{kr85}.    
\end{proof} 

\begin{corollary}
Let $GG_1(V)$ and $G_{MG}(V)$ be Gabriel graphs defined in \cite{w06} and \cite{mg11}, respectively.
Then $GG_1(V) \subseteq G_{MG}(V) \subseteq G_1^1(V)$.
\end{corollary}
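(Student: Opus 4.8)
The plan is to unwind the three Gabriel-graph definitions and compare their neighborhood regions directly, since all three are "$\beta=1$" graphs and the inclusions are driven entirely by which point can block the edge $v_1v_2$. First I would recall the definition of $GG_1(V)$ from Wang \cite{w06} and of $G_{MG}(V)$ from M\"uller and Gao \cite{mg11}, and write each as: $v_1v_2$ is an edge iff the corresponding region $R(v_1,v_2)$ contains no point of $V\setminus\{v_1,v_2\}$. Larger region means fewer edges, so the claimed chain $GG_1(V)\subseteq G_{MG}(V)\subseteq G_1^1(V)$ is exactly the claim $N_1(v_1,v_2,1)\subseteq R_{MG}(v_1,v_2)\subseteq R_{w}(v_1,v_2)$ for every pair $v_1,v_2$. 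So the proof reduces to two region-containment statements, each proved pointwise.

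The key step for the right-hand inclusion is to identify $R_{MG}(v_1,v_2)$, the M\"uller--Gao region, and show it is contained in $G_1^1$'s region, i.e. in $\bigcup N_1(v_1,v_2,1)$ — the union over all admissible choices of disc centers $c_1,c_2$ with $d_1(c_1,c_2)=0$ (so $c_1=c_2$, a single $l_1$-ball through $v_1$ and $v_2$ of radius $\tfrac12 d_1(v_1,v_2)$), noting that by Corollary \ref{equality} an edge of $G_1^1$ only requires \emph{some} such lens to be empty. Hence the RHS region for $G_1^1$ is the \emph{intersection} over admissible centers of those $l_1$-balls, which (when $v_1v_2$ is axis-parallel) collapses to the closed segment $v_1v_2$ and otherwise is the axis-parallel rectangle with diagonal $v_1v_2$; I would check that the M\"uller--Gao region always sits inside that set. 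For the left-hand inclusion I would show the Euclidean-flavoured Gabriel region of \cite{w06} (the closed disc on diameter $v_1v_2$, or whatever metric ball that paper uses) contains $R_{MG}(v_1,v_2)$; a clean way is to exhibit for every point $x$ outside the big region an explicit center placement making the relevant M\"uller--Gao ball avoid $x$, or simply to compare the two regions' defining inequalities coordinatewise.

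I would organize the write-up as: (1) state the two region definitions from the cited papers; (2) prove $N_1(v_1,v_2,1)\subseteq R_{MG}(v_1,v_2)$ by a direct containment of $l_1$-balls, splitting into the axis-parallel and generic cases exactly as in Lemmas \ref{l1-01}--\ref{l1-twolenses}; (3) prove $R_{MG}(v_1,v_2)\subseteq R_w(v_1,v_2)$ similarly; (4) conclude by the "larger region $\Rightarrow$ fewer edges" monotonicity that was already used implicitly in Lemma \ref{finalequalityl1}. The main obstacle I anticipate is purely bibliographic rather than mathematical: the regions in \cite{w06} and \cite{mg11} are defined for specific (possibly non-$l_1$, or graph-metric) settings, so the real work is pinning down the precise shape of $R_{MG}$ and $R_w$ and verifying the coordinatewise inequalities in the non-axis-parallel case, where the $l_1$ balls are tilted squares and the containment is least obvious; once the shapes are written down explicitly, each inclusion is a short case analysis on the position of the blocking point relative to the rectangle $[x(v_1),x(v_2)]\times[y(v_1),y(v_2)]$.
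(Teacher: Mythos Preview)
The paper states this corollary without proof, so there is nothing to compare your attempt against; the authors evidently regard the inclusions as immediate once the three definitions from \cite{w06}, \cite{mg11}, and Section~\ref{l1-section} are placed side by side.

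Your overall strategy---compare the neighbourhood regions and use ``larger region $\Rightarrow$ fewer edges''---is the natural one, but the logic you give for $G_{MG}(V)\subseteq G_1^1(V)$ is muddled. You first write that the relevant region for $G_1^1$ is the union $\bigcup N_1(v_1,v_2,1)$, then correct yourself to the intersection. Neither reduction is valid: because an edge lies in $G_1^1$ iff \emph{some} admissible lens is empty, there is in general no single region $R$ whose emptiness characterises membership in $G_1^1$ (distinct points of $V$ may block distinct lenses while the intersection of all lenses stays empty). What actually suffices for $G_{MG}\subseteq G_1^1$ is: for every pair $v_1,v_2$ there exists at least one lens $N\in N_1(v_1,v_2,1)$ with $N\subseteq R_{MG}(v_1,v_2)$. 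Your step~(2) may well establish this (or the stronger ``every lens is contained in $R_{MG}$''), but the surrounding talk of unions and intersections should be dropped. For the other inclusion $GG_1\subseteq G_{MG}$, a plain region containment $R_{MG}\subseteq R_w$ is exactly what is needed, provided both of those graphs are defined via a single fixed region per edge.

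Two minor points. First, the cited authors are Wulff-Nilsen \cite{w06} and Maignan--Gruau \cite{mg11}, not ``Wang'' and ``M\"uller--Gao''; you will have to consult their actual region definitions before the containments can be verified. Second, for $\beta=1$ in $l_1$ the admissible centres $c_1=c_2$ range over the whole $l_1$-midpoint set of $v_1,v_2$, which is a segment, not a single point, in the generic (non-axis-parallel) case; keep this in mind when you do the case analysis.
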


\section{$\beta$-skeletons for weighted graphs}

Let $G=(V,U,E)$ be a connected, edge-weighted graph in the plane, where $U \subseteq V$ 
is a subset of the set of vertices of $G$. Each edge of $e \in E$ is labelled with 
a positive weight $w(e)$. 
The distance $d_G(p,q)$ between two points $p$ and $q$ of $G$ is the minimum total weight 
of any path connecting $p$ and $q$ in $G$.
Graph $G$ with the distance function $d_G$ is a metric space.\\
The closed disc $D_G(p,r)$ is defined as the set of points $q$ of $G$ for which $d_p(p,q) \leq r$.\\

Abrego et al. \cite{a12} defined Minimum Spanning Tree and Delaunay Graph of $G$ 
in the following way:

\begin{definition}
A minimum spanning tree of $G=(V,U,E)$ is a tree $T=(U,F)$ such that the sum of $d_G(u_i,u_j)$ 
over all edges $u_iu_j \in F$ is minimal. 

The Delaunay graph of $G=(V,U,E)$, denoted by DG(U), is the graph $H=(U,F)$ such that 
$(u_i,u_j) \in F$ if and only if there exists a closed disc $D_G(p,r)$, where $p$ is a 
point of $G$, containing $u_i$ and $u_j$ and no other vertex from $U$.

\end{definition}

Let us assume (like \cite{a12}) that the shortest paths between each pair vertices in $U$ is uniquely 
defined. Similarly as in the previous Section we can define lenses $N_w(u_1,u_2,\beta)$ for edges 
of the graph $G$.
Unfortunately, lenses for $\beta < 1$ can be defined only for particular kinds of graphs
(for each edge $u_1u_2$ connecting vertices in $U$ a cycle of $d_G(u_1,u_2)(1+\frac{1}{\beta})$ 
length has to occur). Since the total weight of edges is limited lenses for a big value 
of $\beta$ do not exist either. Therefore we can define a $\beta$-skeleton $G_{\beta}^w(U)$
like in the Section \ref{l1-section} only in a limited range.

\begin{lemma}
\label{beta-weight}
Let $d_G(c(u_1u_2))$ denote the length of the shortest circle containing the edge $u_1u_2$.
The $\beta$-skeletons $G_{\beta}^w(U)$ for weighted graphs are correctly defined 
for $1 \leq \beta \leq \min_{u_1,u_2 \in U} \frac{d_G(c(u_1u_2))}{2d_G(u_1u_2)}+1$.
\end{lemma}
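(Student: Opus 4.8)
We sketch the argument. The plan is to show that for every pair $u_1,u_2\in U$ and every $\beta$ in the stated range the two points $c_1,c_2$ required by the conditions carried over from Lemma~\ref{lemata} do exist as points of $G$; once they do, the triangle-inequality argument of Lemma~\ref{lemata} puts $u_1,u_2,c_1,c_2$ on a common geodesic of $G$, so the lens $N_w(u_1,u_2,\beta)$ is well defined exactly as in Section~\ref{l1-section}. Write $d=d_G(u_1,u_2)$ and $L=d_G(c(u_1u_2))$, and split on whether $\beta\leq 2$ or $\beta>2$.

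For $1\leq\beta\leq 2$ the conditions read $d_G(u_2,c_1)=d_G(u_1,c_2)=\frac{(2-\beta)d}{2}$ and $d_G(c_1,c_2)=(\beta-1)d$, and (as in Lemma~\ref{lemata}) they force $c_1,c_2$ onto a shortest $u_1$--$u_2$ path; since by the standing assumption that path $P_{12}$ is unique and has length $d$, the points of $P_{12}$ at arc-distance $\frac{\beta d}{2}$ and $\frac{(2-\beta)d}{2}$ from $u_1$ may be taken as $c_1$ and $c_2$. Nothing beyond $\beta\geq 1$ is used here, which is why the range starts at $1$.

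For $\beta>2$ the conditions become $d_G(u_2,c_1)=d_G(u_1,c_2)=\frac{(\beta-2)d}{2}$ and $d_G(c_1,c_2)=(\beta-1)d$, with $u_1,u_2$ now lying between $c_1$ and $c_2$ on a geodesic; so one must extend $P_{12}$ by $\frac{(\beta-2)d}{2}$ beyond each endpoint while keeping the extended path a shortest path. I would take the shortest cycle $C$ through $u_1u_2$, write $C=P_{12}\cup Q$ with $|Q|=L-d$ (hence $|Q|\geq d$ and $L\geq 2d$), and place $c_1$ (resp.\ $c_2$) on $Q$ at arc-distance $\frac{(\beta-2)d}{2}$ from $u_2$ (resp.\ from $u_1$). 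The arcs that must be shortest paths are the one of length $(\beta-1)d$ joining $c_1$ and $c_2$ through $P_{12}$, the arcs $u_2c_1$, $u_1c_2$ of length $\frac{(\beta-2)d}{2}$, and the arcs $u_1c_1$, $u_2c_2$ of length $\frac{\beta d}{2}$ (running along $P_{12}$ and a piece of $Q$); the longest of them is the one of length $(\beta-1)d$, so all of them have length at most $L/2$ exactly when $(\beta-1)d\leq L/2$ (using $L\geq 2d$ for the shorter ones), and that same inequality also lets $c_1$ and $c_2$ fit disjointly on $Q$. Invoking the extremal property of a shortest cycle through an edge---every sub-arc of $C$ of length at most $L/2$ is a geodesic of $G$---one obtains all three distance conditions, together with $d_G(u_1,c_1)=d_G(u_2,c_2)=\frac{\beta d}{2}$, precisely when $\beta\leq\frac{L}{2d}+1$. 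Taking the minimum of $\frac{d_G(c(u_1u_2))}{2d_G(u_1u_2)}+1$ over all pairs in $U$ then makes the construction valid simultaneously for every potential skeleton edge, which is the asserted range.

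The step I expect to be the real obstacle is the honest proof of that extremal property of $C$: that no path of $G$ short-cuts the relevant arcs. A shortcut would combine with $P_{12}$ and a piece of $Q$ into a closed walk of length $<L$ traversing the edge $u_1u_2$, and one must argue that such a closed walk contains a genuine cycle through $u_1u_2$ of length $<L$, contradicting the minimality of $C$; this is routine but mildly delicate, since the walk need not be simple. Everything else---the collinearity, the radius values, the disjointness of $c_1$ and $c_2$ on $Q$, and the whole $\beta\leq 2$ case---reduces to the single inequality $(\beta-1)d\leq L/2$ together with $L\geq 2d$ and elementary arithmetic.
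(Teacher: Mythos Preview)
Your argument is correct and follows the same line as the paper's: both derive the upper bound on $\beta$ from the requirement $d_G(c_1,c_2)=(\beta-1)d_G(u_1,u_2)$, which forces $2(\beta-1)d_G(u_1,u_2)\leq d_G(c(u_1u_2))$. The paper's proof is considerably terser---it states this inequality in a single line without constructing $c_1,c_2$ explicitly, without splitting into the cases $\beta\leq 2$ and $\beta>2$, and without isolating the geodesic property of sub-arcs of the shortest cycle---whereas you carry out all of that and correctly flag the one genuinely delicate step. In that sense your sketch is strictly more informative than the paper's.

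One point the paper's proof includes that yours omits: after deriving the bound, the paper also asserts (again without proof) the monotonicity statement that for $1\leq\beta<\beta'$ in the stated range and any $z_1,z_2\in U$, every lens $N\in N_w(z_1,z_2,\beta)$ is contained in some lens $N'\in N_w(z_1,z_2,\beta')$. This containment is part of what ``correctly defined'' is meant to cover here, since it is invoked immediately afterward to obtain the chain $G_{\beta'}^w(U)\subseteq G_{\beta}^w(U)$. With your explicit placement of the centers it is easy to supply: for $\beta<\beta'$ slide $c_1$ and $c_2$ further out along the same arcs of $C$ (or of $P_{12}$) to the positions $c_1',c_2'$ corresponding to $\beta'$; each new disc then contains the old one, so the lenses are nested. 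Adding a sentence to that effect would make your sketch cover everything the paper's proof claims.
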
   
\begin{proof}
A distance between centers of discs determining a lens $N_w(u_1,u_2,\beta)$ should be
$(\beta-1)d_G(u_1u_2)$. Hence, $d_G(c(u_1u_2)) \geq 2(\beta-1)d_G(u_1u_2)$, i.e.
$\beta \leq \frac{d_G(c(u_1u_2))}{2d_G(u_1u_2)}+1$. \\
Then for $1 \leq \beta < \beta' \leq \min_{u_1,u_2 \in U} \frac{d_G(c(u_1u_2))}{2d_G(u_1u_2)}+1$
and any $z_1,z_2 \in U$ if a lens $N \in N_w(z_1,z_2,\beta)$ then there exists
a lens $N' \in N_w(z_1,z_2,\beta')$ such that $N \subseteq N'$. 
\end{proof}

Note that $\frac{d_G(c(u_1u_2))}{2d_G(u_1u_2)}+1 \geq \frac{2d_G(u_1u_2)}{2d_G(u_1u_2)}+1=2$.
Hence $GG(U)$ and $RNG(U)$ are correctly defined.

\begin{lemma}
For $1 \leq \beta < \beta' \leq 2$ and for every graph $G=(V,U,E)$ the following inequalities occur: \\
$MST(U) \subseteq RNG(U) \subseteq G_{\beta'}^w{(U)} \subseteq G_{\beta}^w{(U)} \subseteq GG(U) \subseteq DG(U)$, \\
where $RNG(U)=G_{2}^w{(U)}$ and $GG(U)=G_1^w{(U)}$.
\end{lemma}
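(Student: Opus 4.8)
The plan is to mirror the structure of Lemma~\ref{finalequalityl1}, which already handled the analogous chain in the $l_1$ metric, and to lean on Lemma~\ref{beta-weight} for the monotonicity in $\beta$. First I would establish the middle inclusion $G_{\beta'}^w(U)\subseteq G_{\beta}^w(U)$ for $1\le\beta\le\beta'\le 2$: by Lemma~\ref{beta-weight} (applied with the observation right after it that the admissible range always contains $[1,2]$), for any edge $z_1z_2$ and any lens $N\in N_w(z_1,z_2,\beta)$ there is a lens $N'\in N_w(z_1,z_2,\beta')$ with $N\subseteq N'$; hence if some lens in $N_w(z_1,z_2,\beta')$ is empty of points of $U\setminus\{z_1,z_2\}$ then the correspondingly ordered larger-$\beta'$... wait, the inclusion goes the other way: a larger $\beta'$ gives a larger lens, so emptiness is harder to achieve, giving $G_{\beta'}^w(U)\subseteq G_{\beta}^w(U)$. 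I would spell this out carefully since the direction of the containment is the one genuinely non-trivial point.

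Next I would pin down the two endpoints $RNG(U)=G_2^w(U)$ and $GG(U)=G_1^w(U)$. For $\beta=1$ the two discs defining the lens both have radius $\tfrac{1}{2}d_G(u_1,u_2)$ and, by the distance conditions, centers coinciding at the midpoint of the (unique) shortest path between $u_1,u_2$, so $N_w(u_1,u_2,1)$ is a single disc of that radius through $u_1$ and $u_2$; a point $x\in U$ lies in it iff $d_G(x,u_1),d_G(x,u_2)\le\tfrac12 d_G(u_1,u_2)$, which is precisely the Gabriel condition, hence $G_1^w(U)=GG(U)$. For $\beta=2$ the distance conditions force $c_1=u_2$ and $c_2=u_1$, so $N_w(u_1,u_2,2)$ is the intersection of the two discs $D_G(u_2,d_G(u_1,u_2))$ and $D_G(u_1,d_G(u_1,u_2))$; $x$ lies in it iff $d_G(x,u_1)\le d_G(u_1,u_2)$ and $d_G(x,u_2)\le d_G(u_1,u_2)$, the relative-neighborhood condition, so $G_2^w(U)=RNG(U)$. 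With these two identifications, the already-proven middle inclusion yields $RNG(U)\subseteq G_{\beta'}^w(U)\subseteq G_{\beta}^w(U)\subseteq GG(U)$.

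It remains to attach the two outer links $MST(U)\subseteq RNG(U)$ and $GG(U)\subseteq DG(U)$. For the latter, if $u_1u_2\in GG(U)$ then the Gabriel disc $D_G(p,\tfrac12 d_G(u_1,u_2))$ centered at the shortest-path midpoint $p$ is empty of other vertices of $U$, so it directly witnesses the Delaunay condition of the definition of $DG(U)$. For $MST(U)\subseteq RNG(U)$, I would use the standard cycle/exchange argument adapted to the metric space $(G,d_G)$: suppose $u_1u_2\in MST(U)$ but $u_1u_2\notin RNG(U)$, so some $x\in U$ satisfies $d_G(x,u_1)<d_G(u_1,u_2)$ and $d_G(x,u_2)<d_G(u_1,u_2)$ (using general position / uniqueness of shortest paths to make the strict inequalities available, exactly as in \cite{kr85} and \cite{a12}); removing $u_1u_2$ from the tree splits $U$ into two components, $x$ lies in one of them, and reconnecting through $x$ with whichever of $xu_1,xu_2$ crosses the cut strictly decreases the total weight, contradicting minimality. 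The main obstacle I anticipate is not any single step but making sure the geometric lemmas of Section~\ref{l1-section} — written for $l_1$ where lenses are rectangles — are replaced by arguments that use only the abstract metric $d_G$ and the triangle inequality, since in a weighted graph a ``lens'' need not have any nice shape; fortunately all that is actually needed is the nesting statement of Lemma~\ref{beta-weight} plus the two boundary computations above, so I would phrase everything purely in terms of distances and invoke ``the other inclusions are proved as in \cite{kr85}'' for the endpoint links, just as Lemma~\ref{finalequalityl1} does.
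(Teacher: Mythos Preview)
Your approach is essentially the paper's: invoke Lemma~\ref{beta-weight} for the middle chain and defer the two outer links to a citation. The paper's own proof is two sentences long and does exactly this, except that it cites \cite{a12} (Abrego et al., who prove $MST(U)\subseteq RNG(U)$ and $GG(U)\subseteq DG(U)$ directly in the weighted-graph setting) rather than \cite{kr85}; your extra paragraphs identifying $G_1^w(U)$ and $G_2^w(U)$ with $GG(U)$ and $RNG(U)$ and sketching the cycle-exchange argument are harmless elaboration but not needed, since in this lemma those identities are definitional. One small point worth tightening: Lemma~\ref{beta-weight} is stated as ``for every $\beta$-lens $N$ there is a $\beta'$-lens $N'\supseteq N$'', whereas the inclusion $G_{\beta'}^w(U)\subseteq G_{\beta}^w(U)$ actually needs the converse direction (for every $\beta'$-lens there is a smaller $\beta$-lens inside it); this converse is immediate from the same construction---move the centers back along the shortest $z_1z_2$-path by the factor $\beta/\beta'$, exactly as in the proof of Lemma~\ref{finalequalityl1}---and you should say so explicitly rather than leave the direction implicit.
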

\begin{proof}
According to Lemma \ref{beta-weight} inclusions 
$RNG(U) \subseteq G_{\beta}^w{(U)} \subseteq G_{\beta'}^w{(U)} \subseteq GG(U) \subseteq DG(U)$
are true. The other inclusions were proved by Abrego et al. \cite{a12}. 
\end{proof}

\begin{figure}[htbp]
\centering
\includegraphics[scale=0.3]{./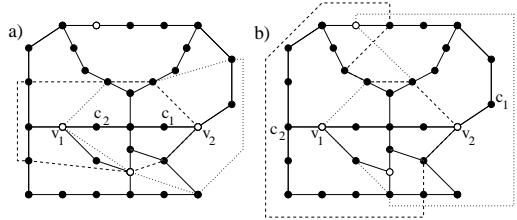}
\caption{Lenses in the weighted graph (each edge has the same weight) for different
values of $\beta$.}
\label{fig:weighted}
\end{figure}

\section{Not uniquely defined generators of lenses} 

In previous sections, we considered $\beta$-skeletons such that centers of discs determining lenses
were defined depending on position and distance of points from $V$. Those points generated sets
of lenses. What will happen when we replace points with objects, e.g. clusters of points or segments ?
We will study this problem focusing on the case of segments. \\ 

Let us consider a set $S$ of $n$ segments and the $l_p$ metric, where $1 \leq p \leq \infty$.
For two segments $s_1,s_2 \in S$ and a parameter $0<\beta<\infty$ we define $N_p^s(s_1,s_2,\beta)$
as a set of lenses $N_p(v_1,v_2,\beta)$, where $v_1,v_2$ are points such that $v_1 \in s_1$ 
and $v_2 \in s_2$. 

\begin{definition}
\label{betaskeleton3}
For a given set of segments $S$ and given parameters $0 \leq \beta$ and $0 \leq p \leq \infty$ 
we define a graph $G_{\beta}^s(S)$such that an edge between segments $s_1$ and $s_2$ exists 
if and only if there exists a lens in $N_p^s(s_1, s_2,\beta)$ whose intersection
with $S \setminus \{s_1,s_2\}$ is empty. 
\end{definition}

In order to define Delaunay triangulation we will use a definition by Brevilliers et al. \cite{bcs08}:
\begin{definition}
A segment triangulation $\mathcal{T}$ of $S$ is such a partition of the convex hull $conv(S)$ of $S$ 
in disjoint sites, edges and faces that: 
\begin{enumerate}
\item Every face of $\mathcal{T}$ is an open triangle whose vertices are in three distinct sites 
of S and whose open edges do not intersect S,
\item No face can be added without intersecting another one,
\item The edges of T are the (possibly two-dimensional) connected components of $conv(S)\setminus 
(F \cup S)$, where  $F$ is the union of faces of $\mathcal{T}$.
\end{enumerate} 
\end{definition}

A segment triangulation of $S$ is Delaunay $DT(S)$ if the circumcircle of each face does not contain 
any point of $S$ in its interior (see Figure \ref{fig:segments}). 

\begin{figure}[htbp]
\centering
\includegraphics[scale=0.2]{./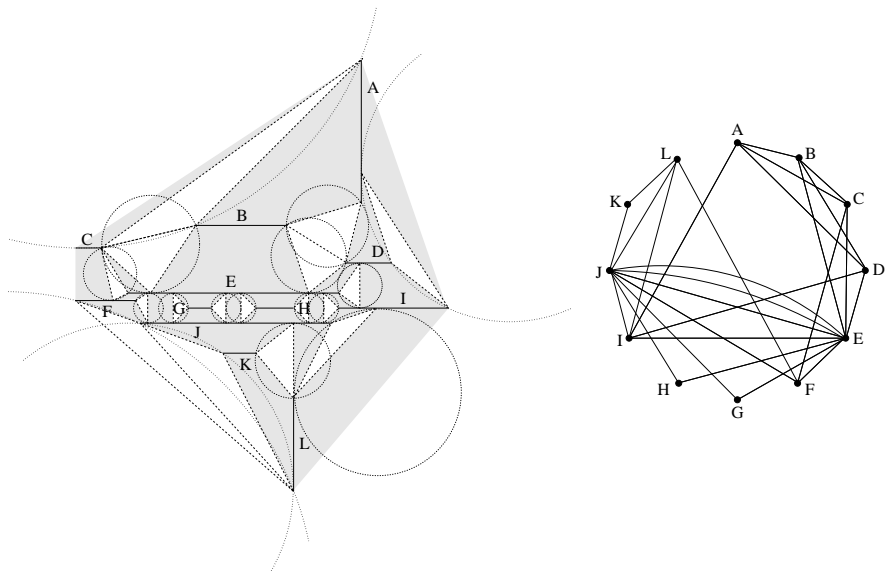}
\caption{Delaunay triangulation for a set of segments and a multigraph corresponding to it.}
\label{fig:segments}
\end{figure}

Note that we can present this triangulation as a multigraph $M=(V,E)$ with a set of vertices $V=S$, 
and a separate edge between vertices $s_1$ and $s_2$ for each edge in $DT(S)$ connecting segments 
$s_1$ and $s_2$ (see Figure \ref{fig:segments}). Each such edge in the graph can be labelled with 
the length of the shortest path between two points of a given edge in $DT(S)$ belonging to opposite 
segments.

We can formulate the following lemma.

\begin{lemma}
\label{gg-dt-s}
For $1 \leq \beta \leq \beta' \leq 2$ we have: \\ 
$MST(S) \subseteq RNG(S) \subseteq G_{\beta'}^s{(S)} \subseteq G_{\beta}^s{(S)} \subseteq GG(S) \subseteq DT(S)$,\\
where $RNG(S)=G_{2}^s{(S)}, GG(S)=G_1^s{(S)}$ and $MST(S)$ is a minimum spanning tree for a set 
of segments $S$.
\end{lemma}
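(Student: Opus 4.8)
The plan is to establish the chain of inclusions in Lemma~\ref{gg-dt-s} by mimicking the structure used for points (Theorem~\ref{faktinkluzja}) and for weighted graphs, working from the inside out and from the outside in. The key observation is that $N_p^s(s_1,s_2,\beta)$ is a \emph{set} of lenses, so membership of an edge $s_1s_2$ in $G_\beta^s(S)$ is an existential condition: \emph{some} lens in the family is empty. This monotonicity of the family under $\beta$ is what drives the two middle inclusions.

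First I would prove $G_{\beta'}^s(S)\subseteq G_\beta^s(S)$ for $1\le\beta\le\beta'\le 2$. Fix an edge $s_1s_2\in G_{\beta'}^s(S)$; then there are points $v_1\in s_1$, $v_2\in s_2$ and a lens $N\in N_p(v_1,v_2,\beta')$ that contains no point of $S\setminus\{s_1,s_2\}$. By the argument already given in the proof of Lemma~\ref{finalequalityl1} (and in \cite{kr85} for $l_p$), for the \emph{same} pair $v_1,v_2$ there is a lens $N'\in N_p(v_1,v_2,\beta)$ with $N'\subseteq N$; hence $N'$ is also empty of points of $S\setminus\{s_1,s_2\}$, so $s_1s_2\in G_\beta^s(S)$. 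The endpoints $RNG(S)=G_2^s(S)$ and $GG(S)=G_1^s(S)$ follow by unwinding Definition~\ref{betaskeleton3} at $\beta=2$ and $\beta=1$, noting that for a fixed pair $v_1,v_2$ the lens $N_p(v_1,v_2,2)$ is the lune used to define $RNG$ and $N_p(v_1,v_2,1)$ is the Gabriel disc.

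Next, $GG(S)\subseteq DT(S)$: if $s_1s_2\in GG(S)$ there are $v_1\in s_1$, $v_2\in s_2$ whose closed Gabriel disc (diameter $v_1v_2$) contains no point of $S\setminus\{s_1,s_2\}$; this disc is a witness that in the segment Delaunay triangulation of \cite{bcs08} the sites $s_1$ and $s_2$ are Delaunay-adjacent, giving a corresponding edge in the multigraph $M$ associated with $DT(S)$. Symmetrically, $MST(S)\subseteq RNG(S)$: take an edge $s_1s_2$ of $MST(S)$ realizing distance $d=d(s_1,s_2)$ between the two segments, attained at $v_1\in s_1,v_2\in s_2$; if some $s_3$ had a point inside the lune $N_p(v_1,v_2,2)$, then that point would be within distance $<d$ of both $v_1$ and $v_2$, so both $d(s_1,s_3)<d$ and $d(s_2,s_3)<d$, and the standard cycle-exchange argument on the spanning tree yields a contradiction. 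So the lune is empty and $s_1s_2\in RNG(S)$.

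The main obstacle I anticipate is not any single inclusion but the bookkeeping forced by the multigraph structure: $DT(S)$ and $MST(S)$ live on a multigraph $M=(S,E)$ in which two sites may be joined by several edges, whereas $G_\beta^s(S)$ is an ordinary graph on $S$. I would address this by declaring at the outset that all inclusions are read as inclusions of edge sets after identifying parallel edges of $M$ with the single pair $\{s_1,s_2\}$ (equivalently: $s_1s_2\in G_\beta^s(S)$ whenever it is in the underlying simple graph of $M$ and the $\beta$-condition holds). A secondary subtlety is that the ``distance'' realizing $MST(S)$ is an infimum of $d_p(v_1,v_2)$ over $v_1\in s_1,v_2\in s_2$, which is attained since segments are compact; I would note this explicitly so that the witnessing pair $v_1,v_2$ exists. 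Once these conventions are fixed, each link of the chain reduces to the corresponding point-set or weighted-graph argument already available in the excerpt or in \cite{kr85,bcs08,a12}, so I would state those reductions and refer to the earlier proofs rather than repeat the geometry.
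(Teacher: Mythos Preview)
Your overall decomposition matches the paper's, and the arguments for the middle inclusions and for $MST(S)\subseteq RNG(S)$ are essentially what the authors do (they explicitly construct the larger lens via a homothety with factor $\beta'/\beta$, and they defer the $MST$ step to \cite{kr85}, which is exactly the cycle-exchange you sketch).

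However, your argument for $GG(S)\subseteq DT(S)$ has a genuine gap. You claim that the Gabriel disc $D$ on diameter $v_1v_2$ is already a Delaunay witness, but this is not so for segment sites. The disc $D$ is only guaranteed to miss $S\setminus\{s_1,s_2\}$; the segments $s_1$ and $s_2$ themselves may well pass through the \emph{interior} of $D$ (nothing forces $v_1$ or $v_2$ to be an endpoint, or $s_i$ to be tangent to $D$). The segment Delaunay triangulation of \cite{bcs08} requires a disc whose interior is free of \emph{all} of $S$ and whose boundary meets both $s_1$ and $s_2$, i.e.\ a disc tangent to both segments. Your $D$ does not provide this.

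The paper supplies the missing geometric step: apply a homothety centred at $v_1$ to shrink $D$ until its image $D'$ is tangent to $s_2$ at some point $t$, then a second homothety centred at $t$ to shrink $D'$ until its image $D''$ is tangent to $s_1$. Since $D''\subseteq D$, it still avoids $S\setminus\{s_1,s_2\}$, and now it is tangent to both $s_1$ and $s_2$, hence a legitimate Delaunay witness. You should insert this double-homothety construction (or an equivalent shrinking argument) in place of the bare assertion that $D$ witnesses adjacency.
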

\begin{proof} 
We want to show that $GG(S) \subseteq DT(S)$.
Let $v_1 \in s_2, v_2 \in s_2$ be such a pair of points that there is a $l_p$ disc $D$ 
with diameter $v_1v_2$ containing no points from segments from $S \setminus \{s_1, s_2\}$ inside of it. 
We transform $D$ by homothety in respect of $v_1$ so that its image $D'$ would be tangent to $s_2$ 
in $t$. Then we transform $D'$ by homothety in respect of $t$ so that its image $D''$ would be 
tangent to $s_1$ (see Figure \ref{fig:gg-dt}). The disc $D''$ lies inside of $D$, i.e. it does not 
intersect segments from $S \setminus \{s_1, s_2\}$, and is tangent to $s_1$ and $s_2$. Hence, 
if the edge $s_1s_2$ belongs to $GG(S)$ then it also belongs to $DT(S)$. \\
Let $c_1, c_2$ be centers of discs determining a lens $N \in N_p(s_1,s_2,\beta)$.
Let $c_1'$ ($c_2'$, respectively) be an image of $c_1$ ($c_2$, respectively)
by homothety with the factor $\frac{\beta'}{\beta}$ in respect of point $v_1$ ($v_2$, respectively).
Then $c_1', c_2'$ are centers of discs determining a lens $N' \in N_p(s_1,s_2,\beta')$
and $N \subseteq N')$. Hence $G_{\beta'}^s(V) \subseteq G_{\beta}^s(V)$. \\
The last inclusion can be proved in the same way as in the paper by Kirkpatrick and Radke 
\cite{kr85}.   

\begin{figure}[htbp]
\centering
\includegraphics[scale=0.35]{./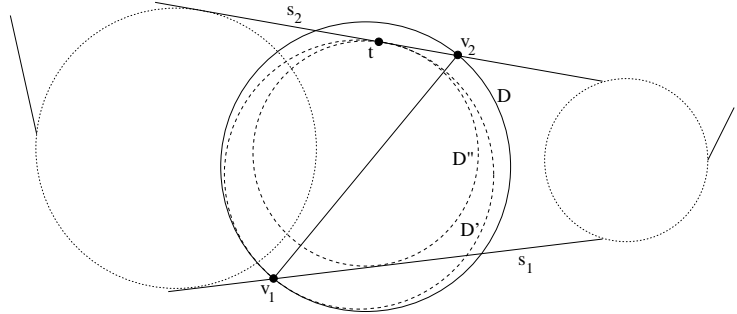}
\caption{$G_{\beta}^s(S) \subseteq DT(S)$.}
\label{fig:gg-dt}
\end{figure}  
\end{proof}

\section{Generalized $\beta$-skeletons}

Finally, we can formulate the definition of the generalized $\beta$-skeletons:

\begin{definition}
\label{betaskeletons5}
For a given set of objects $S$ in space $X$ with a metric $d$ and for parameters 
$0 \leq \beta \leq \infty$ we define a graph $G_{\beta}(S)$ - called generalized $\beta$-skeleton -
as follows: 
two objects $s_1$ and $s_2$ are connected with an edge if and only if  
at least one lens in $N_d(s_1,s_2,\beta)$ is not intersected by any object from 
$S \setminus \{s_1, s_2\}$ where:
\begin{itemize}
\item
for $0<\beta<1$ a lens $N_d(v_1,v_2,\beta) \in N_d(s_1,s_2,\beta)$, where $v_1 \in s_1$ 
and $v_2 \in s_2$, is the intersection of two discs, each of them has radius 
$\frac{d(v_1,v_2)}{2 \beta}$, whose boundaries contain both $v_1$ and $v_2$ and each shortest
path connecting their centers intersects some shortest path between $v_1$ and $v_2$, 

\item
for $1 \leq \beta<\infty$ a lens $N_d(v_1,v_2,\beta) \in N_d(s_1,s_2,\beta)$, where $v_1 \in s_1$ 
and $v_2 \in s_2$, is the intersection of two discs with centers $c_1$ and $c_2$, respectively,
such that $d(c_1, c_2)=(\beta -1)d(v_1,v_2)$ and 
$d(v_1, c_2)=d(v_2,c_1)=|\frac{(\beta -2) d(v_1v_2)}{2}|$.

\item
for  $\beta=0$ ($\beta=\infty$, respectively) a lens $N_d(v_1,v_2,0) \in N_d(s_1,s_2,0)$, 
($N_d(v_1,v_2,\infty) \in N_d(s_1,s_2,\infty)$, respectively), where $v_1 \in s_1$ 
and $v_2 \in s_2$, we have  
$N_d(v_1,v_2,0)=\lim_{\beta \rightarrow 0} N_d(v_1,v_2, \beta)$ and 
$N_d(v_1,v_2, \infty)=\lim_{\beta \rightarrow \infty} N_d(v_1,v_2, \beta)$.
 
\end{itemize}
\end{definition}

Note that in the similar way we can modify the definition of circle-based $\beta$-skeleton.
Moreover, the definition (in both cases) can be applied in multidimensional spaces.  
For example, for $0 < \beta < 1$ in $R^n$, where $n \geq 2$, regions $N(v_1,v_2,\beta)$ are 
determined by spheres which centers are located symmetrically in respect of the edge $v_1v_2$.

\section{Algorithms}

In this section we will describe algorithms computing $\beta$-skeletons in above considered
situations.\\

We will start from $\beta$-skeletons in $R^2$ with $l_1$ metric. 
According to Lemma \ref{l1-01} for $0 < \beta < 1$ the lenses are unique defined. Moreover, 
the points from $V$ are their vertices. \\
We rotate plane to obtain axis-aligned lenses. Then we use a plane sweep algorithm.
We sweep from left to right. The event structure contains a $x$-ordered set $V$.
The state structure contains for each $w \in V$ $y$-ordered, labelled by $w$, list of vertices 
lying on the right from $v$. When the sweep line reaches a vertex $v$ we remove from each list being 
in the state structure all points lying on the opposite side of $v$ than a vertex labelling the list. 
If $v$ is on the list we remove them and we add an edge connecting $v$ with a vertex being a label 
of the list to the set of solutions (see Algorithm \ref{alg-1}). \\

\begin{algorithm}[H]
\footnotesize{
\SetKwFunction{Front}{Front}
\SetKwFunction{SEARCH}{SEARCH}
\SetKwFunction{Y}{Y}
\SetKwFunction{return}{return}
\KwIn{rotated plane and a set of points $V$}
\KwOut{a set $E$ of all edges of $\beta$-skeleton for $V$ }
\BlankLine
create list $X=\{x_1, \ldots, x_n\}$ of all vertices sorted by their $x$-coordinate from left to right\;

\For{all vertices $p \in V$}{
create a list $\Y(p)$ of vertices that follow $p$ in the $X$ list, sorted by their $y$-coordinates\;
}
\For{$i=2$ to $n$}{
    \For{$j=1$ to $i-1$}{
    remove from the list $\Y(x_j)$ all vertices that are separated from $x_j$ on the this list by $x_i$\;
     \If{the first not deleted element on the list $\Y(x_j)$ is $x_i$}{
       add $(x_i,x_j)$ to $E$\;
       remove $x_i$ from $\Y(x_j)$\;
      } 
     }
}
}
\caption{Algorithm computing $G_{\beta}^1(V)$ for $0 < \beta < 1$ }
\label{alg-1} 
\end{algorithm}  

\begin{theorem}
For a given $0<\beta<1$ and a set $V$ of n points in $R^2$ with $l_1$ metric the $\beta$-skeleton
$G_{\beta}^1(V)$ can be computed in $O(n^2)$ time.
\end{theorem}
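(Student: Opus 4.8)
\noindent\emph{Proof plan.} The plan is to recast $G_\beta^1(V)$, for $0<\beta<1$, as an ``empty axis-parallel box'' graph and to show that the plane sweep in Algorithm~\ref{alg-1} computes it in quadratic time. By Lemma~\ref{l1-01} the region $N_1(v_1,v_2,\beta)$ does not depend on $\beta\in(0,1)$ and equals the rectangle with opposite corners $v_1,v_2$ whose sides are parallel to the sides of the $l_1$ unit ball (degenerating to the segment $v_1v_2$ in the exceptional direction; we assume general position, as in Theorem~\ref{faktinkluzja}). Hence, after the $45^{\circ}$ rotation performed in the first line of the algorithm, $N_1(v_1,v_2,\beta)$ is exactly the smallest axis-parallel rectangle $R(v_1,v_2)$ spanned by $v_1$ and $v_2$, and by Corollary~\ref{equality} the edge $v_1v_2$ lies in $G_\beta^1(V)$ iff $R(v_1,v_2)$ contains no other point of $V$ (open or closed box for the open or closed skeleton, and the two agree under general position, since then no point of $V$ other than $v_1,v_2$ lies on the boundary of $R(v_1,v_2)$). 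So it suffices to list all pairs spanning an empty box.

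\noindent\emph{Correctness.} Processing the points $x_1,\dots,x_n$ by increasing abscissa, I would prove by induction on $i$ the invariant: after step $i$, (a) for every $j<k\le i$ the pair $x_jx_k$ has been emitted exactly when $R(x_j,x_k)\cap\{x_1,\dots,x_i\}=\emptyset$, and (b) for every $j\le i$ and $k>i$, $x_k$ is still on the list $Y(x_j)$ exactly when $R(x_j,x_k)$ meets none of $x_1,\dots,x_i$. The inductive step rests on two elementary facts, each obtained by comparing $x$- and $y$-orders and using that an entry $x_k$ of $Y(x_j)$ forces $j<k$: a surviving entry $x_k$ is ``separated from $x_j$ by $x_i$'' precisely when $x_i\in R(x_j,x_k)$, so the deletion step removes exactly the boxes newly hit by $x_i$; and $x_i$ is the first undeleted entry of $Y(x_j)$ precisely when every point inside $R(x_j,x_i)$ has already been swept, i.e.\ precisely when $R(x_j,x_i)$ is empty. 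Equivalently, for a fixed left endpoint $x_j$ one scans the points above, resp.\ below, $x_j$ by increasing abscissa and declares $x_i$ a neighbour iff its ordinate passes below, resp.\ above, the running extremum of the ordinates seen so far; this reformulation makes it transparent that every unordered pair is handled exactly once. Taking $i=n$ in (a) yields correctness.

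\noindent\emph{Running time.} Sorting by abscissa and initializing the $n$ lists $Y(\cdot)$ as doubly linked lists, with a pointer from each point to its occurrences, costs $O(n\log n)$ time and $O(n^2)$ space. The two nested loops of Algorithm~\ref{alg-1} run $\Theta(n^2)$ times, each doing $O(1)$ work besides the deletions; since a point lies on at most $n$ lists and is deleted from any list at most once and never reinserted, at most $n^2$ deletions occur in total, each $O(1)$. Hence the algorithm runs in $O(n^2)$ time. The substantive step is the invariant above --- pinning down that membership in $Y(x_j)$ tracks exactly the events ``$R(x_j,\cdot)$ still empty'' and ``$R(x_j,\cdot)$ just hit'' --- while the geometric reduction from Lemma~\ref{l1-01} and the amortization of the deletions are routine; the only nuisances, the degenerate axis-parallel direction and the open/closed distinction, are harmless under general position.
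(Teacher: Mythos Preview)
Your proposal is correct and follows essentially the same approach as the paper: rotate so that the $l_1$ lenses become axis-parallel boxes (via Lemma~\ref{l1-01}), run the sweep of Algorithm~\ref{alg-1}, and bound the total work by amortizing deletions over all lists, $\sum_i k_i = O(n^2)$. The paper's own proof is terser---it states only the running-time accounting and leaves correctness to the algorithm description---whereas you additionally spell out the invariant that membership in $Y(x_j)$ tracks emptiness of $R(x_j,\cdot)$; this extra detail is sound and does not change the method.
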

\begin{proof}
We can rotate and sort points of $V$ in $O(n \log n)$ time. Lists $Y(x_i)$ can be prepared 
in total $O(n^2)$ time. Time needed for loops activity is $\Sigma_{v \in V} O(k_i+1)$, where
$k_i$ is a number of deleted elements in $i$-th turn of the first loop. 
Since $\Sigma_{v \in V} O(k_i)=O(n^2)$, the algorithm complexity is $O(n^2)$.
\end{proof}

Based on Lemmas \ref{l1-twolenses} and \ref{l1-betabig} it follows that if an edge $v_1v_2$
belongs to a $\beta$-skeleton for $1 \leq \beta < \infty$ then there exist a lens $N$
such that $N \cap V \setminus \{v_1,v_2\} = \emptyset$ and $N \subseteq N_1(v_1,v_2,2)$.
Moreover, for $1 \leq \beta < 2$ the lens $N$ is one of two extremal lenses
or it lies between points eliminating those lenses. \\
Let us rotate plane to obtain axis-aligned lenses. Let us consider a lens such that points of $V$
generating it lie on the top and bottom sides of the lens. To solve the problem, it is sufficient 
to find leftmost point in rightmost lens and rightmost point in leftmost lens. Then we can verify
if the distance between those points is big enough. Note that any point from $V$ inside
$N_1(v_1,v_2,2)$ eliminates an edge $v_1v_2$ from $\beta$-skeleton for $2 \leq \beta < \infty$.
According to Lemma \ref{finalequalityl1} a $\beta$-skeleton is a subset of $DT(V)$.
Hence we have to analyze only $O(n)$ lenses. \\
We use a plane sweep algorithm. We will sweep a plane to the right analyzing rightmost lenses
and next to the left analyzing leftmost ones. The event structure contains positions of $V$ 
elements and sides od lenses sorted in respect of $x$-coordinate. The state structure is 
an interval tree with $n$ leaves corresponding to points of $V$. In nodes we will save 
information about swept lenses. If a sweep line visit a lens we mark a node being a lowest 
ancestor of leaves corresponding to the top and bottom sides of analyzed lens 
(see Figure \ref{fig:l1-alg}). 
If a sweep line reaches a point $v \in V$ we find a path from a root of the tree to a leaf 
corresponding to $v$. We allocate the point to all lenses marked on this path. Then we erase
the markers. If the sweep line leaves the lens we erase the corresponding marker from the tree
(if the marker exists).  

\begin{figure}[htbp]
\centering
\includegraphics[scale=0.4]{./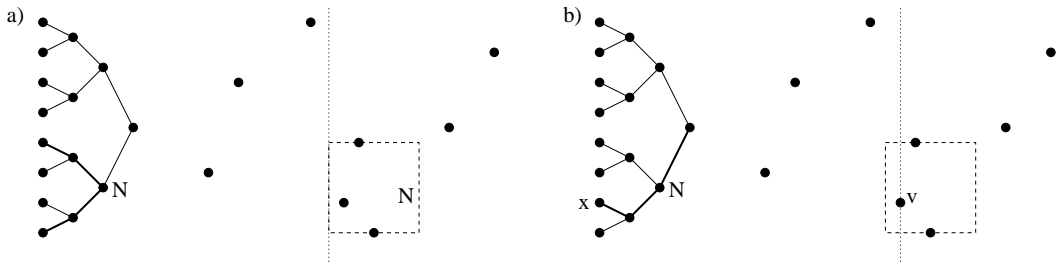}
\caption{A plane sweep algorithm for $G_{\beta}^1(V)$, where $1 \leq \beta < \infty$.
sweep line ecounters a lens side (a) or a point of $V$ (b). }
\label{fig:l1-alg}
\end{figure}

\begin{algorithm}[H]
\footnotesize{
\SetKwFunction{Front}{Front}
\SetKwFunction{SEARCH}{SEARCH}
\SetKwFunction{p}{p}
\SetKwFunction{return}{return}
\KwIn{rotated plane with rightmost lenses and a set of points $V$}
\KwOut{leftmost points belonging to rightmost lenses}
\BlankLine

\While{there are still unvisited points from $V$ or lens sides }{
sweep the plane from left to right\;
\If{ecounter a lens side}
{ mark a node in the interval tree corresponding to the side of the lens\;
 }
\If{ecounter a point $v$}
{ find a path from the root of tree to the leaf corresponding to $v$\;
 \For{each lens mark $N$ on the path}
 {save the pair $(v,N)$ and erase $N$ from the tree }
}
\If{leave a lens}
{ remove mark $N$ (if it exists) from the tree \;
}
}
}
\caption{Algorithm computing leftmost points belonging to rightmost lenses}
\label{alg-2}   
\end{algorithm}   

\begin{theorem}
For a given $1 \leq \beta < \infty$ and a set $V$ of $n$ points in $R^2$ with $l_1$ metric 
the $\beta$-skeleton $G_{\beta}^1(V)$ can be computed in $O(n \log n)$ time.
\end{theorem}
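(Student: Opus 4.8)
The plan is to shrink the set of candidate edges to $O(n)$, reduce the membership test for each candidate to a constant number of ``extreme point inside an axis-parallel rectangle'' queries, and then answer the whole batch with one plane sweep. First I would rotate the plane by $45^{\circ}$, so that $l_1$ balls become axis-parallel squares; by Lemmas \ref{l1-01}, \ref{l1-betabig} and \ref{l1-twolenses} every lens occurring for $1\leq\beta<\infty$ is then an axis-parallel rectangle contained in $N_1(v_1,v_2,2)$. Next I would compute the $l_1$ Delaunay triangulation $DT(V)$, which is well known to be possible in $O(n\log n)$ time; by Lemma \ref{finalequalityl1} we have $G_{\beta}^1(V)\subseteq DT(V)$, so it suffices to decide, for each of the $O(n)$ edges $v_1v_2$ of $DT(V)$, whether $N_1(v_1,v_2,\beta)$ contains a lens empty of $V\setminus\{v_1,v_2\}$.

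For $2\leq\beta<\infty$ this is immediate: by Corollary \ref{equality}, $G_{\beta}^1(V)=G_2^1(V)$, and by Lemma \ref{l1-betabig} the lens $N_1(v_1,v_2,2)$ is unique and is contained in every lens of $N_1(v_1,v_2,\beta)$; hence $v_1v_2\in G_{\beta}^1(V)$ if and only if the rectangle $N_1(v_1,v_2,2)$ contains no point of $V\setminus\{v_1,v_2\}$. This is a batch of $O(n)$ rectangle-emptiness queries.

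For $1\leq\beta<2$ I would use Lemma \ref{l1-twolenses}: the size-minimal lenses of $N_1(v_1,v_2,\beta)$ form a one-parameter family that slides (after possibly swapping the two coordinate axes, horizontally) between two extremal rectangles $N$ and $N'$, with $N\cup N'=N_1(v_1,v_2,2)$ and $N\cap N'\neq\emptyset$. Since every lens is contained in $N\cup N'$, a point of $V$ can block some lens of the family only if it lies in $N$ or in $N'$, and the set of slide parameters for which a fixed point stays inside the lens is an interval; points of $N\setminus N'$ block an initial range of parameters and points of $N'\setminus N$ a final one. Therefore $v_1v_2\in G_{\beta}^1(V)$ if and only if $N$ or $N'$ is already free of points of $V\setminus\{v_1,v_2\}$, or else the horizontal gap between the rightmost point of $V\setminus\{v_1,v_2\}$ inside $N$ and the leftmost point of $V\setminus\{v_1,v_2\}$ inside $N'$ exceeds the width of the lens (a point of $N\cap N'$ blocks every lens and makes this gap non-positive, so it is handled correctly). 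Thus each candidate edge is settled by two queries: the leftmost point of $V$ inside the rectangle $N'$, and the rightmost point of $V$ inside the rectangle $N$.

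It remains to answer, in $O(n\log n)$ total time, the resulting $O(n)$ extreme-point-in-rectangle queries (together with the $O(n)$ rectangle-emptiness queries from the case $\beta\geq2$), which is exactly what Algorithm \ref{alg-2} does: sweep a line across the plane --- once left-to-right to locate the leftmost point inside each ``rightmost'' lens, and once right-to-left for the ``leftmost'' lenses --- maintaining an interval tree whose leaves are the points of $V$ sorted by $y$-coordinate, so that a lens can be inserted or removed and a newly met point can report the lenses that currently contain it, each in $O(\log n)$ time. With $O(n)$ events this yields $O(n\log n)$ for the two sweeps; adding the $O(n\log n)$ for $DT(V)$ and $O(1)$ arithmetic per candidate edge proves the theorem. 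The step I expect to be the real obstacle is the correctness proof for the $1\leq\beta<2$ case: showing that inspecting only the two extremal lenses and one distinguished point inside each actually certifies (non-)existence of an empty lens --- i.e., that no intermediate slide position is blocked by a point that is extreme in neither $N$ nor $N'$, that the interval structure really does single out one leftmost and one rightmost witness, and that the degenerate configuration in which $v_1v_2$ is parallel to a side of an $l_1$ ball (non-unique disc centers, cf. Lemmas \ref{l1-01} and \ref{l1-twolenses}) is covered by the same test.
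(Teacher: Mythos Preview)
Your proposal is correct and follows essentially the same approach as the paper: rotate so that lenses are axis-aligned, restrict to the $O(n)$ edges of $DT(V)$ via Lemma~\ref{finalequalityl1}, reduce the $1\le\beta<2$ test to locating the rightmost point in the leftmost extremal lens and the leftmost point in the rightmost one, and batch all these queries with the interval-tree plane sweep of Algorithm~\ref{alg-2}. The only cosmetic difference is that the paper runs the sweep four times (two directions after each of two rotations, to handle edges whose lenses slide along the other axis), whereas you phrase this as ``possibly swapping the two coordinate axes''; these are equivalent.
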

\begin{proof}
We use a plane sweep algorithm four times (two times in each direction after clockwise 
and counterclockwise rotation of the plane - depending on a slope of edges generating lenses).
The number of events is $O(n)$. The $i$-th event needs $O(\log n + k_iO(\ log n))$ time, where 
$k_i$ is a number of added or removed marks. Since $\Sigma k_i = O(n)$, the algorithm complexity is 
$O(1) \times (O(n \log n) +O(n \log n)) = O(n \log n)$
\end{proof}

Now, we will present an algorithm computing $G_{\beta}^w(U)$ for a weighted graph $G=(V,U,E)$.\\
The algorithm is natural. First we compute distances between all vertices in $G$ using 
e.g. Johnson's algorithm \cite{clrs09}. Then we find a set of points which are candidates for centers
of disc determining lenses. For this purpose we analyze distances of two points generating lens
from point belonging to any edge. Next, we verify which pairs of candidates can create lenses
(a ratio of a distance between centers of discs determining lens to a distance between generators
is $(\beta-1)$). In the last step analyze an intersection of a set $U$ with each lens. \\
Unfortunately, the algorithm is expensive due to a big number of possible lenses.

\begin{lemma}
A number of all candidates for centers of disc determining lenses is $O(m^2)$, where $m = |E|$.
\end{lemma}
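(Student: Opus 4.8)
The plan is to count, for each ordered pair of edges of $G$, how many distinct points can serve as a center of one of the two discs bounding a lens $N_w(u_1,u_2,\beta)$ attached to some edge $u_1u_2$ of the Delaunay multigraph, and then argue that the total is $O(m^2)$. First I would recall that a candidate center $c$ of a disc for the lens generated by $v_1\in U$, $v_2\in U$ must satisfy the two distance equations from Definition~\ref{betaskeletons5} (and its predecessors): $d_G(c,v_1)=\tfrac{\beta d_G(v_1,v_2)}{2}$ and $d_G(c,v_2)=|\tfrac{(\beta-2)d_G(v_1,v_2)}{2}|$, so $c$ lies on a shortest path realizing $d_G(v_1,v_2)$, hence on some edge $e\in E$ (or at one of its endpoints). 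Thus each candidate center is pinned down by (i) the edge $e$ it lies on and (ii) the pair of generators $v_1,v_2$; but once $e$ and the pair $\{v_1,v_2\}$ are fixed, the position of $c$ along $e$ is determined by the distance from an endpoint of $e$, which is a fixed value $\tfrac{\beta d_G(v_1,v_2)}{2}-d_G(v_1,\text{endpoint of }e)$ computed after the all-pairs shortest-path step. So the number of candidates is at most the number of (edge, generator-pair) triples.

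Next I would bound the number of relevant generator pairs. Since a $\beta$-skeleton is a subgraph of the Delaunay graph $DG(U)$ (this is the content of the weighted-graph inclusion lemma proved just above), and $DG(U)$ is a planar-type graph on the vertex set $U$ with $O(|U|)$ edges, only $O(|U|)$ pairs $\{v_1,v_2\}$ can ever generate a lens that matters. But $|U|\le |V|$ and, more to the point, $|U|=O(m)$ since $G$ is connected, so the number of generator pairs we must consider is $O(m)$. Combining this with the $|E|=m$ choices of the edge $e$ on which the center lies gives $O(m)\cdot O(m)=O(m^2)$ candidates, which is the claimed bound.

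The step I expect to be the main obstacle is making the geometric/metric reasoning about where a candidate center can live fully rigorous: one has to be careful that a shortest path from $v_1$ to $v_2$ in $G$ really does pass through edges of $E$ (this uses the standing assumption that shortest paths between vertices of $U$ are unique, so there is no ambiguity about which edges are traversed), and that a point $c$ off all shortest $v_1$–$v_2$ paths cannot satisfy both distance equations simultaneously — essentially the colinearity argument of Lemma~\ref{lemata} transplanted into the graph metric, where ``colinear'' becomes ``lying on a common shortest path.'' A secondary subtlety is whether to count a center by the single edge containing it or to allow $c$ to be a vertex shared by several edges; handling the vertex case just adds an $O(|V|)=O(m)$ term and does not change the bound. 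Once these points are nailed down, the counting itself is immediate, so I would keep the write-up short: state that each candidate is determined by an edge of $E$ together with an edge of $DG(U)$, invoke the $O(m)$ bound on $|E(DG(U))|$, and conclude $O(m^2)$.
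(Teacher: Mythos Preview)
Your proposal and the paper's proof do entirely different things. The paper's ``proof'' of this lemma is not an upper-bound argument at all: it \emph{constructs} a family of weighted graphs (two cycles $C_1,C_2$ on $n/2$ vertices each, with $\Theta(n^2)$ cross edges, one heavy edge on $C_1$, etc.) and counts that for large $\beta$ the total number of candidate centres is $\Theta(n^4)=\Theta(m^2)$. In other words, despite the $O(\cdot)$ in the statement, the lemma is being used as a \emph{lower bound}---it justifies the sentence immediately preceding it (``Unfortunately, the algorithm is expensive due to a big number of possible lenses''). Your write-up attacks the opposite direction, so even if it were airtight it would not reproduce what the paper actually proves here.

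Independently of that mismatch, your upper-bound sketch has a real gap. You argue that a candidate centre $c$ must lie on a shortest $v_1$--$v_2$ path because of the two distance equations. That colinearity conclusion (the graph-metric analogue of Lemma~\ref{lemata}) is valid only for $1\le\beta\le 2$: there one has $d_G(v_1,c)+d_G(c,v_2)=d_G(v_1,v_2)$. For $\beta>2$ the equations give $d_G(v_1,v_2)+d_G(v_2,c_1)=d_G(v_1,c_1)$, so $c_1$ sits \emph{beyond} $v_2$, not between $v_1$ and $v_2$; it can lie on any edge of $G$ reachable at the right distance, and the paper's construction exploits exactly this (``for a sufficiently big value of parameter $\beta$''). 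Your appeal to $DG(U)$ to cut the number of generator pairs to $O(|U|)$ is also extraneous to what the paper does: the algorithm being analysed loops over all pairs in $U$, and the lemma is counting the candidates that algorithm produces, not the candidates for pairs that will ultimately survive into the skeleton.
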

\begin{proof}
Let us assume that $|V|=n$, $n$ is even, $V = C_1 \cup C_2$, $|C_1|=|C_2|$, sets $C_1$ and $C_2$ 
form cycles which every second element belongs to $U$. One edge of $C_1$ has a big weight.
Edges connecting vertices of $C_1 \setminus U$ with vertices of $C_2$ also have
a big weight (see Figure \ref{fig:sr-wag}). \\
Hence the graph $G$ has $m=O(n^2)$ edges.
For a sufficiently big value of parameter $\beta$ and each of $O(n-k)$ pairs vertices generating 
a lens, which are connected by $2k$ short edges, there exist $O(n-k) \times O(n)$ edges containing 
candidates for disc centers. Hence, the total number of candidates is 
$\Sigma_{k=1}^{n-1} O(n-k) \times O(n-k) \times O(n) = O(n^4)=O(m^2)$.
\end{proof}

\begin{figure}[htbp]
\centering
\includegraphics[scale=0.4]{./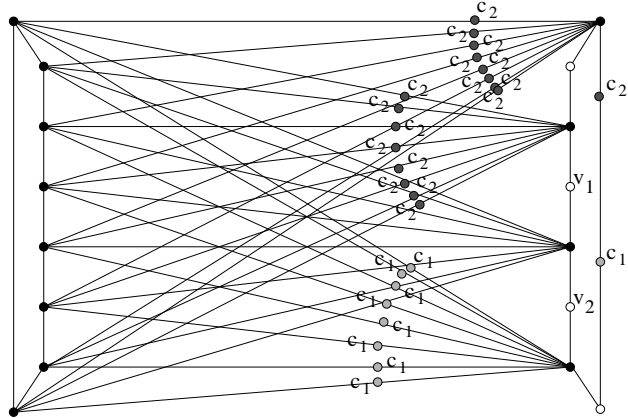}
\caption{ An example of many centers of disc determining lenses for generators $v_1, v_2$.}
\label{fig:sr-wag}
\end{figure}

\begin{algorithm}[H]
\footnotesize{
\SetKwFunction{Front}{Front}
\SetKwFunction{SEARCH}{SEARCH}
\SetKwFunction{A}{A}
\SetKwFunction{PotCen}{PotCen}
\SetKwFunction{return}{return}
\KwIn{weighted graph $G=(V,U,E),$ parameter $\beta \geq 1$}
\KwOut{set $F$ of all edges of $\beta$-skeleton for $G$ }
\BlankLine
compute all distances in graph $G$ between vertices in $V$\;
\For{every edge $u_1u_2$}{
   find all potential centers $\PotCen(u_1,u_2)$ for $u_1u_2$\;
     \For{every pair $c_1,c_2 \in \PotCen(u_1,u_2)$}{
       \If{$d_G(c_1,c_2)=(\beta -1)d_G(u_1,u_2)$}{
       add pair $(c_1,c_2)$ to $\A(u_1,u_2)$\;
          }
       }
     \For{every pair $(c_1,c_2) \in \A(u_1,u_2)$}{
      \If{lens defined by $c_1,c_2$ is empty}{
      add $u_1u_2$ to $F$\;
      }
      }
}
}
\caption{Algorithm for computing $\beta$-skeleton for $\beta \geq 1$ for weighted graphs }  
\end{algorithm}  

\begin{theorem}
Let $d_G(c(u_1u_2))$ denote the length of the shortest circle containing the edge $u_1u_2$ 
in the weighted graph $G=(V,U,E)$.
For $1 \leq \beta \leq \min_{u_1,u_2 \in U} \frac{d_G(c(u_1u_2))}{2d_G(u_1u_2)}+1$
the $\beta$-skeleton $G_{\beta}^w(U)$ can be computed in $O(nm^3)$ time, where $|V|=n$ and $|E|=m$.
\end{theorem}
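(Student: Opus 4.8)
The plan is to analyze the running time of the four-step algorithm line by line, multiplying the number of iterations of each loop by the cost of its body, and to invoke the preceding lemma bounding the number of candidate centers by $O(m^2)$.

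First I would bound the preprocessing: computing all pairwise distances in $G$ via Johnson's algorithm costs $O(nm + n^2\log n)$, which is dominated by the later terms and so contributes nothing to the final bound. Next, the outer loop runs over all $m$ edges $u_1u_2$. Inside, finding all potential centers $\mathtt{PotCen}(u_1,u_2)$ amounts to scanning, for the fixed generators $u_1,u_2$, the points of every edge of $G$ and checking the two distance conditions $d(v_1,c)=|\tfrac{(\beta-2)d(u_1,u_2)}{2}|$ and $d(v_2,c)=\tfrac{\beta d(u_1,u_2)}{2}$; since every edge contributes at most a constant number of such points for the fixed pair $(u_1,u_2)$ and there are $m$ edges, the set $\mathtt{PotCen}(u_1,u_2)$ has size $O(m)$ and is built in $O(m)$ time per edge. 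Wait --- the preceding lemma gives $O(m^2)$ candidates \emph{in total over all generators}, i.e.\ $O(m)$ per generating pair when summed, so per edge $u_1u_2$ we should count $O(m)$ candidate centers; I would state this carefully since the quoted lemma's worst case is what forces the $m^2$ accounting.

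Then I would cost the inner loops. The loop over pairs $c_1,c_2\in\mathtt{PotCen}(u_1,u_2)$ examines $O(m^2)$ pairs and for each checks the single condition $d_G(c_1,c_2)=(\beta-1)d_G(u_1,u_2)$ in $O(1)$ time (the distance is precomputed), building $\mathtt{A}(u_1,u_2)$ of size $O(m^2)$ in $O(m^2)$ time. Finally, the loop over pairs $(c_1,c_2)\in\mathtt{A}(u_1,u_2)$ runs $O(m^2)$ times, and testing whether the lens determined by $c_1,c_2$ is empty --- i.e.\ whether any vertex of $U$ lies inside both discs $D_G(c_1,\cdot)$ and $D_G(c_2,\cdot)$ --- requires scanning all $n$ vertices against the precomputed distances, costing $O(n)$ per lens. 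So the body of the outer loop costs $O(m) + O(m^2) + O(m^2\cdot n) = O(nm^2)$, and over $m$ edges the total is $O(nm^3)$, absorbing the preprocessing. Correctness of the output set $F$ follows because Lemma~\ref{beta-weight} guarantees that in the stated range of $\beta$ the lenses $N_w(u_1,u_2,\beta)$ are well defined, and Definition~\ref{betaskeletons5} says $u_1u_2$ is an edge of $G_\beta^w(U)$ exactly when some lens in $N_w(u_1,u_2,\beta)$ avoids $U\setminus\{u_1,u_2\}$ --- which is precisely what the algorithm tests.

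The main obstacle I anticipate is justifying the counting of candidate centers per generating pair so that it matches the $O(m^2)$ global bound from the preceding lemma without double-counting: one must argue that for a \emph{fixed} pair $(u_1,u_2)$ each edge of $G$ contributes only $O(1)$ points satisfying the two radius equations (a shortest-path distance function restricted to a single edge is monotone, so each level set meets it in $O(1)$ points), giving $O(m)$ candidates per pair, and that the worst-case instance of the lemma is what makes $\sum$ over all pairs reach $\Theta(m^2)$. A secondary subtlety is the cost of the emptiness test for a lens in a graph metric: I would note that since all needed distances $d_G(c_i,u)$ are already available from the all-pairs computation (candidates lie on edges, and a point on an edge is at a known offset from that edge's two endpoints), each of the $O(n)$ vertex checks is $O(1)$, keeping the per-lens cost at $O(n)$ rather than requiring any further shortest-path work.
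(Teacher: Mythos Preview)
Your analysis is close to the paper's and reaches the right bound, but there is one genuine gap in the counting. You assert that ``the outer loop runs over all $m$ edges $u_1u_2$'' and later multiply the per-iteration cost by $m$. That is not justified: the outer loop ranges over \emph{pairs of generators in $U$}, of which there are $O(|U|^2)=O(n^2)$, not $m$. With your per-pair accounting of $O(m)$ candidate centers and $O(m^2)$ candidate lenses, a naive multiplication would then give $O(n^2)\cdot O(nm^2)=O(n^3m^2)$, which need not be $O(nm^3)$.

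The paper avoids this by combining the two center bounds globally rather than per iteration: it uses the preceding lemma's total count of $O(m^2)$ candidate centers together with the per-pair bound of $O(m)$ candidates, and then bounds the total number of lenses by
\[
\sum_i k_i^2 \;\le\; \bigl(\max_i k_i\bigr)\cdot\sum_i k_i \;=\; O(m)\cdot O(m^2)\;=\;O(m^3),
\]
where $k_i$ is the number of candidate centers for the $i$-th generating pair. Multiplying by the $O(n)$ emptiness test gives $O(nm^3)$, and the remaining work (Johnson, building $\mathtt{PotCen}$, the distance checks) is absorbed. Your last paragraph already contains both ingredients --- the $O(m)$ per-pair bound and the $O(m^2)$ total from the lemma --- so the fix is simply to combine them via this inequality instead of asserting $m$ outer iterations. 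The rest of your proposal (cost of Johnson's algorithm, $O(1)$ distance lookups for points on edges, $O(n)$ per-lens emptiness check, and the correctness appeal to Lemma~\ref{beta-weight} and the definition) is fine and matches the paper's approach.
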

\begin{proof}
The most time-consuming is the last part of the algorithm, i.e. verification of emptiness of lenses.
For one pair of generators there are at most $O(m)$ candidates for centers of disc determining lenses.
Hence, there are $O(m) \times O(m^2)= O(m^3)$ lenses to verify. The total algorithm complexity
is $O(nm^3)$.
\end{proof}

\begin{corollary}
For $\beta=2$ the $\beta$-skeleton $G_{\beta}^w(U)$ can be computed in $O(n^3)$ time.
\end{corollary}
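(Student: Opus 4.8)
The plan is to exploit the fact that for $\beta = 2$ there is essentially nothing to search for: the lens is unique and the two discs defining it are centred at the generators themselves, so the expensive enumeration of candidate centres that dominates the general algorithm disappears. Concretely, by the defining distance conditions for $1 \le \beta < \infty$ (Definition~\ref{betaskeletons5}, and the discussion preceding Lemma~\ref{beta-weight}), the centres $c_1, c_2$ of the two discs determining a lens $N_w(u_1,u_2,2)$ must satisfy $d_G(c_1,c_2) = (\beta-1)\,d_G(u_1,u_2) = d_G(u_1,u_2)$ and $d_G(u_1,c_2) = d_G(u_2,c_1) = |\frac{(\beta-2)\,d_G(u_1,u_2)}{2}| = 0$; hence $c_2 = u_1$, $c_1 = u_2$, and $N_w(u_1,u_2,2)$ is the single lens $D_G(u_1,\,d_G(u_1,u_2)) \cap D_G(u_2,\,d_G(u_1,u_2))$. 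In the general algorithm this means the set $\mathcal{A}(u_1,u_2)$ of centre pairs is a singleton and no pass over $\mathrm{PotCen}(u_1,u_2)$ is required.

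First I would compute $d_G(x,y)$ for all $x,y \in V$ by Johnson's algorithm~\cite{clrs09}, in $O(nm + n^2\log n)$ time; since $m = O(n^2)$ this is $O(n^3)$, and it suffices to keep the $U\times U$ block of the distance matrix. Next, for each of the $O(n^2)$ unordered pairs $\{u_1,u_2\}\subseteq U$ I would test emptiness of the unique lens: because an edge of $G_\beta^w(U)$ is excluded only by a \emph{vertex} of $U$ lying inside the lens, it is enough to check, for every $w \in U \setminus \{u_1,u_2\}$, whether $d_G(u_1,w) \le d_G(u_1,u_2)$ and $d_G(u_2,w) \le d_G(u_1,u_2)$ (with strict inequalities for the open, i.e. $RNG$, variant). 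Each check is $O(1)$ against the precomputed distances, so this phase costs $O(n^2)\cdot O(n) = O(n^3)$; if no witness $w$ is found, $u_1u_2$ is added to $G_2^w(U)$. Correctness follows from Lemma~\ref{beta-weight} (note $2 \le \min_{u_1,u_2\in U}\frac{d_G(c(u_1u_2))}{2d_G(u_1u_2)}+1$ always, so $G_2^w(U)$ is well defined) together with the subsequent lemma of Section~5, which identifies $G_2^w(U)$ with $RNG(U)$ and hence with exactly those pairs whose lens is $U$-empty.

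Adding the two phases yields the claimed $O(n^3)$ bound. The only genuine subtlety is the uniqueness step: one must verify that the distance equalities really pin down $c_2 = u_1$ and $c_1 = u_2$ in the graph metric — which they do, since $d_G(u_1,c_2) = 0$ leaves no alternative — and, relatedly, that the graph lens $N_w(u_1,u_2,2)$ coincides with the set of points at $d_G$-distance at most $d_G(u_1,u_2)$ from both $u_1$ and $u_2$, so that membership of a vertex reduces to a single pair of comparisons. A minor bookkeeping point is that the $O(n^3)$ estimate for the distance phase uses $m = O(n^2)$; for sparse $G$ Johnson's algorithm is faster, so $O(n^3)$ is a worst-case upper bound.
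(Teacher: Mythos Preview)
Your proposal is correct and follows essentially the same approach as the paper: observe that for $\beta=2$ the centres collapse to the generators themselves so each pair $\{u_1,u_2\}$ has a unique lens, run Johnson's algorithm for all pairwise distances, and then test each of the $O(n^2)$ lenses against the $O(n)$ vertices of $U$ for a total of $O(nm)+O(n^3)=O(n^3)$. Your write-up is in fact more explicit than the paper's---you spell out the derivation $c_2=u_1$, $c_1=u_2$ from the distance conditions and the precise membership test---but the underlying argument is the same.
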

\begin{proof}
Centers of disc determining lenses for $RNG$ are uniquely defined. There are points of $U$.
Since $|U|=O(n)$, there are at most $O(n^2)$ lenses. Complexity of the Johnson's algorithm
id $O(nm)$. Hence, the total complexity of the algorithm is $O(nm)+O(n^3)=O(n^3)$.   
\end{proof}

The most interesting and difficult is an algorithm computing $\beta$-skeletons for a set $S$
of $n$ segments in $R^2$ with $L_2$ metric. We will outline the solution. 
Details can be found in the paper \cite{km14}.
Let us consider a set of parametrized lines containing given segments. A line $P(s_i)$
contains a segment $s_i \in S$ and its parametrization is 
$(x_1^i,y_1^i)+t_i \times [x_2^i-x_1^i,y_2^i-y_1^i]$, where $(x_1^i,y_1^i)$ and $(x_2^i,y_2^i)$ are
ends of the segment $s_i$. Let $s_1$ and $s_2$ be generators of a lens.
We shoot a rays from points $q(t_1) \in P(s_1)$. Let us assume that the rays pass through
a point $h \in s$, where $s \in S \setminus \{s_1,s_2\}$.
For $0 < \beta < 1$ the ray reflects. The sum of angle of incidence and angle of reflection 
is equal to an inscribed angle for a lens generated for a given value of $\beta$.
For $1 \leq \beta < \infty$ we compute a new line perpendicular to the ray such that 
a distance between an intersection point $r$ and the point $q(t_1)$ is 
$d(p,q(t_1))=\frac{1}{\beta}d(h,q(t_1))$.
Both new created lines will be called reflected rays.
The reflected ray intersects line $P(s_2)$ in point $w(t_2)$.
A relation between parameters $t_1$ and $t_2$ is a hyperbolic function.
For a given segment $s \in S \setminus \{s_1,s_2\}$, hyperbolas for points $h \in s$
create hyperbolic stripe. Intersection of this stripe with a hyperbolic stripe describing relation
between parameters $t_2$ and $t_1$ is a polygon whose sides are parts of hyperbolas 
(see Figure \ref{fig:rays}). 

\begin{figure}[htbp]
\centering
\includegraphics[scale=0.3]{./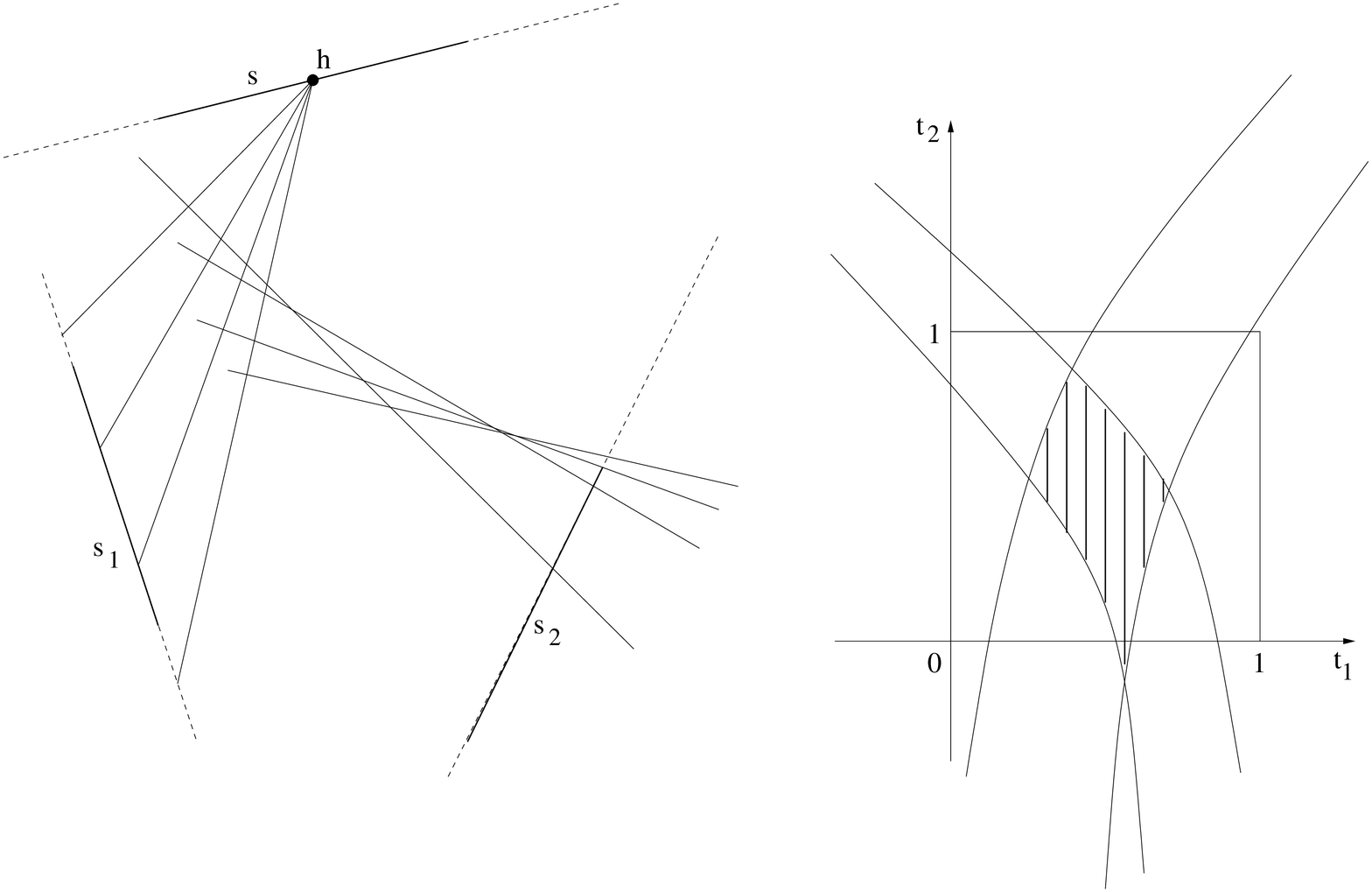}
\caption{Rays for computing of $RNG$ (a), hyperbolic polygon (b).}
\label{fig:rays}
\end{figure}   

\begin{theorem}
An edge generated by segments $s_1$ and $s_2$ belongs to $\beta$-skeleton $G_{\beta}^s(S)$
if and only iff a sum of polygons computed for all segments $s \in S \setminus \{s_1,s_2\}$
does not cover a square $[0,1] \times [0,1]$. 
\end{theorem}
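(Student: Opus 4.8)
The plan is to unravel the definitions on both sides and reduce the theorem to the relationship between a point $(t_1,t_2)\in[0,1]^2$ and the emptiness of a lens. First I would set up the parametrization: a point $v_1\in s_1$ is $q(t_1)=P(s_1)(t_1)$ for $t_1\in[0,1]$ and a point $v_2\in s_2$ is $w(t_2)=P(s_2)(t_2)$ for $t_2\in[0,1]$. By Definition \ref{betaskeleton3}, the edge $s_1s_2$ lies in $G_\beta^s(S)$ iff there is \emph{some} choice of $v_1\in s_1$, $v_2\in s_2$ for which the lens $N_p(v_1,v_2,\beta)$ avoids every segment in $S\setminus\{s_1,s_2\}$. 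Equivalently, $s_1s_2\notin G_\beta^s(S)$ iff for \emph{every} $(t_1,t_2)\in[0,1]^2$ the lens $N_p(q(t_1),w(t_2),\beta)$ is hit by at least one segment $s\in S\setminus\{s_1,s_2\}$. So the goal becomes: the union over $s$ of the ``bad'' parameter sets covers $[0,1]^2$ iff the edge is absent.

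Next I would identify the bad set for a fixed segment $s$ with the hyperbolic polygon of the construction. The key geometric fact to verify is the one the preceding text sketches: a point $h\in s$ lies in the lens $N_p(q(t_1),w(t_2),\beta)$ iff the reflected-ray construction from $q(t_1)$ through $h$ meets $P(s_2)$ exactly at $w(t_2)$ --- more precisely, iff $(t_1,t_2)$ lies on the hyperbola associated to $h$. This is where I would spend most of the effort. For $1\le\beta<\infty$ the lens membership condition $h\in N_p(v_1,v_2,\beta)$ is $d(h,c_1)\le \tfrac{\beta d(v_1,v_2)}{2}$ and $d(h,c_2)\le\tfrac{\beta d(v_1,v_2)}{2}$ with the centers $c_1,c_2$ placed as in Lemma \ref{lemata}; I would translate the boundary case (equality) into the statement that the foot of perpendicular $r$ from $h$ onto the relevant line satisfies $d(r,q(t_1))=\tfrac1\beta d(h,q(t_1))$, which is exactly the ``new line perpendicular to the ray'' in the sketch, and then express $w(t_2)$ as the intersection of that line with $P(s_2)$; eliminating $h$ along $s$ (a one-parameter family) yields a curve in $(t_1,t_2)$ that one checks is a branch of a hyperbola, and letting $h$ range over $s$ sweeps out the hyperbolic stripe. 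The $0<\beta<1$ case is analogous with the ``reflection'' variant of the construction, where the inscribed angle of the lens controls the reflected direction. Intersecting the stripe for $s$ with the stripe describing the $t_1\leftrightarrow t_2$ admissibility (the constraint that $q(t_1),w(t_2)$ actually are endpoints of a valid lens) gives the hyperbolic polygon $\Pi_s\subseteq[0,1]^2$, and by the equivalence just established, $\Pi_s$ is precisely the set of parameter pairs whose lens is intersected by $s$.

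Finally I would assemble the pieces. The edge $s_1s_2$ belongs to $G_\beta^s(S)$ iff there exists $(t_1,t_2)\in[0,1]^2$ not belonging to any $\Pi_s$, i.e. iff $\bigcup_{s\in S\setminus\{s_1,s_2\}}\Pi_s \neq [0,1]^2$, which is the claimed statement. I would also note the boundary/closed-vs-open subtlety: whether the lens is taken open or closed determines whether the hyperbola bounding $\Pi_s$ is included, but this does not affect the covering statement provided the convention is applied consistently (and for the generic ``general position'' inputs the boundary is a measure-zero set that can be absorbed on either side). The main obstacle is the hyperbolicity claim itself --- showing that the relation between $t_1$ and $t_2$ forced by a fixed reflection point $h$ is genuinely a hyperbolic (degree-two) curve in $l_2$, and that as $h$ traverses the segment $s$ these curves foliate a region whose boundary consists of hyperbolic arcs; once that is in hand, the rest is bookkeeping with the definition of $G_\beta^s(S)$. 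For the full computational details, including the explicit hyperbola equations and the handling of degenerate slopes, I would refer to \cite{km14}.
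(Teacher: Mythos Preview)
Your approach is essentially the same as the paper's: both reduce the statement to the equivalence ``$(t_1,t_2)$ is uncovered $\Leftrightarrow$ the lens $N_p(q(t_1),w(t_2),\beta)$ misses every $s\in S\setminus\{s_1,s_2\}$'', which is exactly your final ``assemble the pieces'' paragraph. The paper's proof is in fact just that paragraph, stated in two lines; the geometric work you plan to spend most of your effort on (the ray/reflection construction, the hyperbolicity of the $t_1$--$t_2$ relation, and the identification of $\Pi_s$ with the bad set) is treated by the paper as already established by the preceding setup and deferred to \cite{km14}, so you are proving more than the theorem itself requires.
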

\begin{proof}
If there exists an uncovered point $(t_1,t_2) \in [0,1] \times [0,1]$, then a lens generated 
for points $q(t_1)$ and $w(t_2)$ does not intersect any segment in $S \setminus \{s_1,s_2\}$.
The opposite implication also is true.
\end{proof}

\begin{theorem}
The $\beta$-skeleton $G_{\beta}^s(S)$ for a set of $n$ segments $S$ in $R^2$ with $L_2$ metric
can be computed for $0 < \beta < 1$ in $O(n^4)$ time and for $1 \leq \beta < \infty$ in $O(n^3)$
time.
\end{theorem}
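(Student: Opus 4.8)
The plan is to count, for each ordered pair of generator segments $(s_1,s_2)$, the combinatorial complexity of the "parameter diagram'' in the unit square $[0,1]\times[0,1]$ and then apply the characterization theorem just stated. Fix $s_1,s_2\in S$. For every third segment $s\in S\setminus\{s_1,s_2\}$ the excerpt tells us that the locus of parameters $(t_1,t_2)$ for which the lens generated by $q(t_1)\in P(s_1)$ and $w(t_2)\in P(s_2)$ meets $s$ is a hyperbolic polygon: it is bounded by arcs of the hyperbolas obtained by letting the hit point $h$ range over the two endpoints of $s$ and over the two extreme contact configurations of the lens. So for the pair $(s_1,s_2)$ we obtain $O(n)$ such hyperbolic polygons, each of constant complexity (a bounded number of arcs, each arc a piece of a hyperbola), and the edge $s_1s_2$ lies in $G_\beta^s(S)$ iff the union of these $O(n)$ regions fails to cover the unit square. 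First I would bound the complexity of the arrangement of these $O(n)$ constant-complexity algebraic curves inside the square: any two of them cross in $O(1)$ points, so the arrangement has $O(n^2)$ vertices, edges and faces, and whether the union covers the square can be read off in $O(n^2)$ time by a standard sweep (or by computing the arrangement and checking that every face is covered). Summed over all $O(n^2)$ ordered pairs $(s_1,s_2)$ this already gives $O(n^4)$, which is the claimed bound for $0<\beta<1$.

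For the range $1\le\beta<\infty$ I would improve the per-pair cost from $O(n^2)$ to $O(n\log n)$, so that the total becomes $O(n^3\log n)$; the paper states $O(n^3)$, so I would either absorb the logarithmic factor into a slightly more careful data structure or argue it away as in the companion paper \cite{km14}. The key observation is that in this regime the reflected ray is obtained by an affine (perpendicular-foot) construction rather than a genuine reflection, so the relation between $t_1$ and $t_2$ coming from a single hit point $h$ is a single branch of a hyperbola that is monotone in the relevant window, and the hyperbolic stripe contributed by a whole segment $s$ is therefore a region bounded above and below by two monotone arcs. Monotonicity lets me avoid building the full arrangement: I would sweep a vertical line across $[0,1]$ in the $t_1$-direction, maintaining the union of the $O(n)$ stripes restricted to the current fiber as a set of $O(n)$ intervals in $t_2$, and detect an uncovered point exactly when this set of intervals does not exhaust $[0,1]$. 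Events are the $O(n)$ stripe boundaries entering or leaving a fiber and the $O(n^2)$ crossings — but because each stripe is bounded by two monotone arcs, the covered portion of a fiber changes only at $O(n)$ combinatorially distinct events that matter for coverage, and a balanced tree over the $t_2$-coordinate supports each update in $O(\log n)$, giving $O(n\log n)$ per pair and $O(n^3\log n)$, hence $O(n^3)$ overall after the usual tightening. The preprocessing — computing the $O(n)$ parametrized lines $P(s_i)$, setting up the hyperbola coefficients for a given pair, and sorting the event points — costs $O(n\log n)$ per pair and does not dominate.

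The main obstacle, and the step I would spend the most care on, is the geometric bookkeeping that justifies "constant complexity per hyperbolic polygon'' and, in particular, the monotonicity claim used in the $1\le\beta<\infty$ sweep. One has to verify that over the relevant parameter window the reflected-ray map $t_1\mapsto w(t_2)$ really is a single monotone branch (reflections across segment interiors, as well as the endpoint cases and the degenerate configurations where the lens becomes a segment or a strip at $\beta=0,\infty$, must be checked separately), and that the boundary of the stripe contributed by $s$ is genuinely two arcs rather than something with more pieces. Once those structural facts are pinned down the counting and the sweep are routine: $O(n)$ regions of $O(1)$ complexity per pair, $O(n^2)$ pairs, and either an $O(n^2)$ arrangement test for $0<\beta<1$ or an $O(n\log n)$ monotone sweep for $\beta\ge1$, yielding $O(n^4)$ and $O(n^3)$ respectively. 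Degenerate inputs (collinear or overlapping segments, segments sharing an endpoint) I would dispose of by a brief general-position remark or a symbolic perturbation, exactly as in the point case.
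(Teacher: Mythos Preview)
Your treatment of the case $0<\beta<1$ matches the paper: $O(n)$ constant-complexity hyperbolic polygons per pair, union/arrangement in $O(n^2)$, times $O(n^2)$ pairs, for $O(n^4)$ total.

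For $1\le\beta<\infty$ you miss the actual mechanism. The paper does \emph{not} speed up the per-pair test; it keeps the $O(n^2)$ union computation exactly as in the first case. The saving comes from reducing the number of pairs: by Lemma~\ref{gg-dt-s} one has $G_{\beta}^s(S)\subseteq DT(S)$, and the segment Delaunay triangulation has only $O(n)$ edges, so only $O(n)$ candidate generator pairs need to be examined. This gives $O(n)\cdot O(n^2)=O(n^3)$ directly, with no logarithmic factor to ``tighten.''

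Your proposed route---still looping over all $O(n^2)$ pairs but hoping to drop the per-pair cost to $O(n\log n)$ via a monotone sweep---has a real gap. Even if each stripe is bounded by two monotone arcs, the union of $n$ such stripes along a vertical fiber can change combinatorially at $\Theta(n^2)$ crossings, and coverage of $[0,1]$ can flip at any of them; the assertion that ``only $O(n)$ combinatorially distinct events matter for coverage'' is not justified and is false in general for monotone pseudo-trapezoids. You end up at $O(n^3\log n)$ by your own count and then wave at \cite{km14} to remove the log, but no such argument is supplied. The fix is simply to invoke the Delaunay containment and test only $O(n)$ pairs.
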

\begin{proof}
A sum of $n$ polygons can be computed in $O(n^2)$ time. A verification $O(n^2)$ pairs 
of lens generators needs $O(n^4)$ time.
According to Lemma \ref{gg-dt-s} and linearity of $DT(S)$, for $1 \leq \beta < \infty$
we have to check only $O(n)$ pairs of generators. Hence, a complexity of the algorithm
is $O(n^3)$. 
\end{proof}

For $\beta=1$ we can use $2$-order Voronoi diagrams for segments to compute $GG(S)$.
In this case a complexity of the algorithm is $O(n \log n)$ \cite{km14}.

\section{Conclusions}

In this paper we show a way of defining $\beta$-skeletons in general.
We have based our proposition on a distance criterion and we described conditions
should be satisfied if the lenses are not defined uniquely.
We have focused our considerations only on a few special cases which in our opinion
well describe the idea of this general definition.
In a similar way, we can also define $\beta$-skeletons for example for a set of polygons
or for the jungle river metric.
It is also easy to generalize this definition for higher dimensions. 
One can consider a couple of new problems regarding this definition.
It would be interesting to check how those changes can influence the time 
of algorithms computing $\beta$-skeletons. For example, if the $RNG$ for segments could be
computed faster than in $O(n^3)$ time.
Many questions can also concern properties of $\beta$-skeletons which they have for different objects
and their practical applications. 
 
\noindent
{\bf Acknowledgements}\\
\noindent
The authors would like to thank Evanthia Papadopoulou for important discussions. 
We thank Jerzy W. Jaromczyk for his comments and suggestions.

\bibliographystyle{abbrv}

\end{document}